\newtheorem{theorem}{Theorem}[section]
\newtheorem{lemma}[theorem]{Lemma}
\newtheorem{definition}{Definition}
\newtheorem{remark}{Remark}
\newproof{proof}{Proof}
\newcommand\ajustspaceandequationnumber{%
   \vspace{-\belowdisplayskip}
   \vspace{-\abovedisplayskip}
   \addtocounter{equation}{-1}}
\newcommand{\real}{\mathbb{R}}
\newcommand{\trans}{\mathsf{T}} 
\newcommand{\conjtrans}{\mathsf{H}}
\newcommand{\mc}{\mathcal}
\newcommand{\1}{\mathds{1} }
\DeclareSymbolFont{bbold}{U}{bbold}{m}{n}
\DeclareSymbolFontAlphabet{\mathbbold}{bbold}
\newcommand\oprocendsymbol{\hbox{$\square$}}
\newcommand\oprocend{\relax\ifmmode\else\unskip\hfill\fi\oprocendsymbol}
\def\QEDopen{{\setlength{\fboxsep}{0pt}\setlength{\fboxrule}{0.2pt}\fbox{\rule[0pt]{0pt}{1.3ex}\rule[0pt]{1.3ex}{0pt}}}}
\def\QED{\QEDopen} 
\begin{document}
\begin{frontmatter}
\title{{\bf On the Real Stability Radius of Sparse Systems}}

\author{Vaibhav Katewa and Fabio Pasqualetti} 
\address{Department of Mechanical Engineering, University of California, Riverside, CA, USA}
\ead{\{vkatewa,fabiopas\}@engr.ucr.edu}

\begin{abstract}
In this paper, we study robust stability of sparse LTI  systems using the stability radius (SR) as a robustness measure. We consider real perturbations with an arbitrary and pre-specified sparsity pattern of the system matrix and measure their size using the Frobenius norm. We formulate the SR problem as an equality-constrained minimization problem. Using the Lagrangian method for optimization, we characterize the optimality conditions of the SR problem, thereby revealing the relation between an optimal perturbation and the eigenvectors of an optimally perturbed system. Further, we use the Sylvester equation based parametrization to develop a penalty based gradient/Newton descent algorithm which converges to the local minima of the optimization problem. Finally, we illustrate how our framework provides structural insights into the robust stability of sparse networks.
\end{abstract}

\begin{keyword}
Stability radius \sep Sparse network systems \sep Robust stability \sep Optimization
\end{keyword}
\end{frontmatter}

\section{Introduction}
Guaranteed stability under parameter uncertainty is one of the central problems in robust design of dynamical systems. Consider the following Linear Time-Invariant (LTI) dynamical system
\begin{align}  \label{eq:dynamics}
\mathcal{D} x(t) = Ax(t),
\end{align}
where $x\in\real^{n}$ is the state, $A\in\real^{n\times n}$, and $\mc{D}$ can either be the continuous-time differential operator (i.e., $\mc{D}x(t) = \dot{x}(t)$) or the discrete-time shift operator (i.e., $\mc{D}x(t) = x(t+1)$). 
Let the complex plane $\mathbb{C}$ be divided into any two disjoint sets as $\mathbb{C} = \mathbb{C}_g \cup \mathbb{C}_b$, where $\mathbb{C}_g$ is open. 
Stability of \eqref{eq:dynamics} requires the eigenvalues of $A$ to lie in the stability region $\mathbb{C}_g$.
Assuming that $A$ is stable, the robust stability analysis of \eqref{eq:dynamics} involves characterizing the eigenvalues of the affine perturbations of $A$ given by 
\begin{align} \label{eq:pert_str}
A \rightsquigarrow A(\Delta) \triangleq A+B\Delta C, 
\end{align} 
where $\Delta \in \real^{m\times p}$ is the perturbation matrix and
$B \in \real^{n\times m}$, $C \in \real^{p\times n}$ are structure
matrices. The perturbed matrix $A(\Delta)$ can also be interpreted
as the closed loop matrix of the following linear system
\begin{align*}
\mc{D}x(t) &= Ax(t) + Bu(t),\\
y(t) &= Cx(t),
\end{align*}
with static feedback $u(t) = \Delta y(t)$.  

There have been numerous studies on eigenvalue characterization and stability of the perturbed matrix \eqref{eq:pert_str} (see \cite{DH-AJP:05} for a comprehensive treatment). However, an inherent \emph{crucial assumption} in these studies (and of the robust stability theory) is that all entries of the perturbation $\Delta$ are allowed to be freely perturbed. Clearly, this assumption is not applicable in modern control systems which are increasingly networked and distributed in nature and, as a result, exhibit a specific sparsity structure. In such systems, the matrix $A$ typically has an associated sparsity pattern, and its certain entries are fixed/zero, and it is feasible to perturb only the non-fixed entries of $A$. Therefore, the perturbations $\Delta$ applied to $A$ cannot be chosen freely and must satisfy certain sparsity constraints as well. 

In this paper, we develop a novel framework to study the robust stability of LTI systems with sparsity constraints. Let $S\in \{0,1\}^{m\times p}$ be a binary matrix that specifies the sparsity structure of the perturbation $\Delta$. Specifically,
\begin{align*}
\Delta_{ij} = 
\begin{cases}
0  \quad \text{if} \:\:S_{ij} = 0, \text{and} \\
\star \quad \text{if} \:\:S_{ij} = 1,
\end{cases}
\end{align*}
where $\star$ denotes any real number. Further, let $\mathbf{\Delta}_S$ denote the set of sparse perturbations given by
\begin{align} \label{eq:sparse_pert_set}
\mathbf{\Delta}_S =\{\Delta \in \real^{m\times p}: S^{c} \circ  \Delta  = 0\},
 \end{align} 
 where $S^{c} \triangleq 1_{m\times p} - S$ denotes the complementary sparsity structure matrix. 
We consider the notion of Stability Radius (SR) as the measure of robust stability, which is the minimum-size real perturbation that moves an eigenvalue(s) of $A(\Delta)$ outside the stability region. Formally, the SR is defined as
\begin{align} \label{eq:stab_rad}
r &\triangleq \inf\: \{\lVert \Delta \rVert : \Gamma(A(\Delta)) \cap \mathbb{C}_b \neq \emptyset, \:\Delta \in \mathbf{\Delta}_S\},
\end{align}
where $\Gamma(\cdot)$ denotes the spectrum of a matrix. The SR provides a worst case measure for the robustness of the system in the sense that all perturbations with $||\Delta||< r$ are guaranteed to preserve the stability of the perturbed system. 
Note that while matrices $B$ and $C$ can be chosen to impose certain perturbation structures on $A$ (such as zero column(s) or row(s)), they cannot capture arbitrary sparsity constraints. Thus, we require the explicit sparsity constraint $\Delta \in \mathbf{\Delta}_S$ in \eqref{eq:stab_rad}.

The perturbation size $\lVert\Delta \rVert$ can be measured using the spectral norm or Frobenius norm, and these respective cases are referred to as $2$-norm SR and $F$-norm SR. The case when $\Delta$ is allowed to be complex is referred to as complex SR. The case $B=C=I$ and $\mathbf{\Delta}_S = \real^{m\times p}$, where each entry of $A$ is allowed to be perturbed independently, is referred to as unstructured SR.  
In this paper, we study the sparse, real, F-norm SR problem by formulating it as an equality-constrained optimization problem. The real SR problem is more suitable for engineering applications where the dynamics matrix $A$ and its perturbations are typically real. Further, unlike the spectral norm, the Frobenius norm explicitly measures the entry-wise perturbations of $\Delta$, which is useful in characterizing the \emph{size and structure} of sparse perturbations. 

\noindent\textbf{Related work} The stability radius problem without sparsity constraints has a rich history. Although robust stability problems have been studied in various forms in the past, the notion of $2$-norm  stability radius was introduced formally in \cite{DH-AJP:86,DH-AJP:86b}. Various bounds and characterizations of unstructured, complex, and real SRs were given in \cite{DH-AJP:86,CVL:85}. In \cite{DH-AJP:86b}, characterizations of structured, complex SR was presented in terms of the $H_{\infty}$-norm of the associated transfer function and solutions of parametrized algebraic Riccati equations. Bisection algorithms to compute the complex SR were presented in \cite{RB:88,DH-BK-AL:89} and algorithms to compute the $H_{\infty}$-norm were given in \cite{SB-VB-PK:89,SB-VB:90,NAB-MS:90}. Since the optimal perturbation for the complex SR problem is always rank-$1$ \cite{DH-AJP:05}, the $2$-norm and the $F$-norm SRs are equal for the complex case.

The real SR problem is considerably more difficult than its complex counterpart \cite{DH-AJP:86}. Qiu et.al. presented several lower bounds for the unstructured case in \cite{LQ-EJD:91,LQ-EJD:92} and a complete algebraic characterization of the structured case was presented in \cite{LQ-BB-AR-EJD-PMY-JCD:95}. Based on this characterization, a level-set algorithm was developed in \cite{JS-PVD-ALT:96} for the structured case and an implicit determinant method was provided for the unstructured case in \cite{MAF-AS:14}. For a comprehensive treatment of the $2$-norm SR problem, see \cite[Chapter~5]{DH-AJP:05}.

While the $2$-norm SR problem has been studied extensively, there are limited studies on the $F$-norm SR problem. Note that due to the fundamental difference between the spectral and Frobenius norms, the procedure in \cite{LQ-BB-AR-EJD-PMY-JCD:95} to characterize the real, $2$-norm SR cannot be used to characterize the real, $F$-norm SR. In \cite{NAB-AVB-PD:99} and \cite{NAB-AVB-PD:01}, lower bounds on the real, $F$-norm SR were provided for the unstructured and structured cases, respectively. Recently, a number of works have appeared that use iterative algorithms to approximate the $2$-norm/$F$-norm real SR \cite{NG-MM:15,MWR:15,NG:16,NG-CL:13,NG-MG-TM-MLO:17}. Typically, these algorithms use two levels of iterations. The inner iteration approximates the rightmost (outermost, for discrete-time case) points of spectral value sets, and the outer iteration verifies the intersection of these points with the stability boundary. All of these aforementioned studies consider the non-sparse SR problem.

In a very recent paper \cite{SCW-MW-MZ-RAD:18}, which was developed
independently and concurrently with our paper, the authors study the
structured distance of an LTI system from the set of systems that do
not exhibit a general property $\mathcal{P}$, such as controllability,
observability, and stability. They provide necessary conditions for a
locally optimal perturbation and develop an algorithm to obtain such
solution. Since the framework in \cite{SCW-MW-MZ-RAD:18} is developed
for a general class of problems, the provided necessary conditions are
implicit in terms of abstract linear maps. On the other hand, we use a
different approach to obtain stronger and explicit necessary
conditions for the sparse SR problem. In addition, we provide
sufficient conditions for a local minimum, thereby completely 
characterizing the local minima. Further, the paper
\cite{SCW-MW-MZ-RAD:18} makes a crucial assumption on the surjectivity
of a linear map that limits the number of allowed sparsity
constraints. Our framework does not have such a limitation, and is
valid for a larger class of sparsity constraints. Finally, our
gradient/Newton descent algorithm is simpler than the algorithm
proposed in~\cite{SCW-MW-MZ-RAD:18}.


\noindent \textbf{Contributions} The contribution of the paper is two-fold. First, we propose a novel approach to compute the sparse SR by formulating the SR problem as an equality constrained minimization problem. We characterize its local optimality conditions, thereby revealing important geometric properties of the optimal perturbed system.  
Second, using the Sylvester equation based parametrization, we develop a penalty-based gradient algorithm to solve the optimization problem that is guaranteed to converge to a local minima. Numerical studies are included to illustrate various properties of the optimal perturbations and the algorithm, and to highlight the usefulness of the framework for sparse networks.


\noindent \textbf{Paper organization} In Section \ref{sec:opt_prob} we present our mathematical notation and some properties that we use in the paper, and formulate the SR problem as an  optimization problem with equality constraints. Section \ref{sec:soln_opt_prob} contains the local optimality conditions of the SR problem. In Section \ref{sec:grad_algo} we develop a gradient based algorithm to compute local solutions. Numerical examples are presented in Section \ref{sec:simulations}. Finally, we conclude the paper in Section \ref{sec:conclusion}.  


\section{Problem Formulation}  \label{sec:opt_prob}
\subsection{Mathematical Notation and Properties}
We use the following notation throughout the paper:
$\Vert\cdot\Vert_F$ and $\Vert\cdot\Vert_2$ denote the Frobenius and spectral norm of a matrix, respectively. $\circ$ and $\otimes $ denote the Hadamard and Kronecker product, respectively. The identity matrix is denoted by $I$. $\Gamma(\cdot)$, $(\cdot)^{\trans}$ and tr($\cdot$) denote the spectrum, transpose and trace of a matrix, respectively. $(\cdot)^{+}$ and $\alpha(\cdot)$ denote the pseudo-inverse and spectral abscissa of a matrix, respectively. A positive-definite matrix $A$ is denoted by $A>0$. $(\cdot)^{*}$ and $(\cdot)^{\conjtrans}$ denote the complex conjugate and the conjugate transpose, respectively. Re($\cdot$) and Im($\cdot$) denote the real and imaginary parts of a complex number, respectively. vec($\cdot$) denotes the vectorization of a matrix. diag($a$) denotes a $n\times n$ diagonal matrix with diagonal elements given by $a\in\real^{n}$. $1_{m\times n}$ denotes a $m\times n$ matrix of all ones. Finally, $j = \sqrt{-1}$ denotes the unit imaginary number.


We use the following mathematical properties for the derivation of our results \cite{KBP-MSP:12}, \cite{JRM-HN:99}:
\begin{enumerate} [label=P.\arabic*, align=left]
\item $\Vert A \Vert_{ \text{F}}^{2} = \text{tr}(A^\trans A) = \text{vec}^\trans(A)\text{vec}(A)$, \label{prop:frob}
\item $\text{vec}(AB) = (I\otimes A)\text{vec}(B) = (B^{\trans} \otimes I)\text{vec}(A)$, \label{prop:vec1}
\item $\text{vec}(ABC) = (C^{\trans}\otimes A)\text{vec}(B)$, \label{prop:vec2}
\item $(A\otimes B)^{\trans} = A^{\trans} \otimes B^{\trans}$ and $(A\otimes B)^{\conjtrans} = A^{\conjtrans} \otimes B^{\conjtrans}$, \label{prop:kron}
\item $(A\otimes B)(C\otimes D) = (AC\otimes BD)$ and \\ \hspace*{12pt}$(A\otimes B)^{+} = A^{+} \otimes B^{+}$, \label{prop:kron1}  
\item $\text{vec}(A\circ B) = \text{vec}(A)\circ \text{vec}(B),  (A\circ B)^\trans = A^\trans\circ B^\trans$, \label{prop:had2} 
\item $\frac{d}{dX}\text{tr}(AX) \!= \!A^\trans$, $\frac{d}{dX}\text{tr}(X^\trans X) \!=\! 2X$, $\frac{d}{dx}(Ax)\! =\! A$, \label{prop:der} 
\item Let $D_xf$ and $D^2_xf$ be the gradient and Hessian of \\ \noindent $f(x): \mathbb{R}^{n}\rightarrow \mathbb{R}$. Then, $df = (D_xf) ^{\trans} dx$ and, \\ \noindent $d^2f = (dx)^{\trans}(D^{2}_xf) dx$, \label{prop:grad_Hess} 
\item $\text{vec}(A^{\trans}) = T_{m,n}\text{vec}(A)$, where $A\in\real^{m\times n}$ and\\ \noindent $T_{m,n} \in \{0,1\}^{mn \times mn}$ is a binary permutation matrix. \label{prop:per_mat} 
\end{enumerate}

\subsection{Sparse Stability Radius as an Optimization Problem}
In this subsection we formulate the real, sparse, $F$-norm SR problem as an equality-constrained optimization problem. Due to space limitations, we present the analysis only for continuous-time systems in this paper. The analysis for discrete-time systems is analogous and can be obtained using a similar procedure. Since the stability region of continuous-time systems is the open left-half complex plane, we have the following definition of the SR.
\begin{definition} {\bf(Sparse stability radius)} The stability radius of the continuous-time system \eqref{eq:dynamics} is given by
\begin{align} \label{eq:cont_stab_rad}
r_C &\triangleq \inf\: \{\lVert \Delta \rVert_F : \alpha(A(\Delta)) \geq 0, \:\Delta \in \mathbf{\Delta}_S \subset \real^{m\times p}\},
\end{align}
where $A(\Delta)= A+B\Delta C$ and $\mathbf{\Delta}_S$ denotes the set of sparse perturbations characterized by the structure matrix $S$ (see \eqref{eq:sparse_pert_set}).  \oprocend
\end{definition}

We make the following assumption regarding the stability of the nominal system \eqref{eq:dynamics}.

\noindent \textbf{A1:} The matrix $A$ is stable, i.e., $\alpha(A)<0$.

Assumption \textbf{A1} ensures that the SR is strictly greater than zero. Since the eigenvalues of $A(\Delta)$ are a continuous function of $\Delta$, the infimum in \eqref{eq:cont_stab_rad} is achieved on the imaginary axis of the complex plane \cite{DH-AJP:05}. Thus, we have
\begin{align} \label{eq:cont_stab_rad1}
r_C & = \min\: \{\lVert \Delta \rVert_F : \alpha(A(\Delta)) = 0, \:\Delta \in \mathbf{\Delta}_S \}.
\end{align}
This motivates the reformulation of the sparse SR problem as the following optimization problem:
\begin{align} \label{eq:opt_cost}
\textbf{SR:} \hspace{10pt}\underset{\Delta  \in \real^{m\times p}, \:x\in\mathbb{C}^{n}, \:\omega\in \real}{\min} \quad \frac{1}{2} &\:||\Delta||_F^2  \hspace{70pt}
\end{align}
\ajustspaceandequationnumber
\begin{subequations}
\begin{align}  \label{eq:eigv_assgn}
\text{s.t.} \quad  (A+B\Delta C)x &=j \omega x , \\ \label{eq:eigv_norm}
 x^{\conjtrans}x &= 1, \\ \label{eq:spar_const1} 
 S^{c} \circ  \Delta &= 0,
\end{align}
\end{subequations}
where the eigenvalue-eigenvector constraint \eqref{eq:eigv_assgn} is a reformulation of the spectral constraint in \eqref{eq:cont_stab_rad1} in terms of an eigenvector-eigenvalue pair $(x,j\omega)$. The normalization constraint \eqref{eq:eigv_norm} is added to ensure uniqueness of the eigenvector. The sparsity constraint \eqref{eq:spar_const1} is a reformulation of $\Delta \in \mathbf{\Delta}_S$ (c.f. \eqref{eq:sparse_pert_set}).

Several remarks are in order for the optimization problem {\bf SR}. First, the eigenvalue-eigenvector constraint \eqref{eq:eigv_assgn} is not convex. As a result, the optimization problem {\bf SR} is not convex, and it may have multiple local minima. This is a typical property of all $2$-norm/$F$-norm, complex/real SR problems, as well as most other minimum distance problems \cite{DK-MV:15}. 


Second, besides assigning an eigenvalue(s) on the imaginary axis, the equality constraint in \eqref{eq:cont_stab_rad1} also requires the remaining eigenvalues of $A(\Delta)$ to lie in the open left-half complex plane. However, we have omitted this constraint in \textbf{SR} because it will be inherently satisfied by the \emph{global} minimum of \textbf{SR} due to (i) Assumption \textbf{A1}, (ii) the continuity properties of the eigenvalues of $A(\Delta)$, and (iii) the definition of SR in \eqref{eq:cont_stab_rad1}. However, a \emph{local} minimum of \textbf{SR} need not satisfy this constraint necessarily. Thus, all the local minima $\hat{\Delta}$ of \textbf{SR} should be verified against the constraint $\alpha(A(\hat{\Delta})) = 0$ (see Section \ref{sec:simulations} for an example).

Third, it may be possible that \textbf{SR} is non-feasible and there does not exist any $\Delta$ that satisfies constraints \eqref{eq:eigv_assgn}-\eqref{eq:spar_const1}.\footnote{A trivial example is: $A=\left[\begin{smallmatrix} -1 & 2\\0 & -2\end{smallmatrix}\right], B = C = I_2$ and $S = \left[\begin{smallmatrix} 0 & 1\\0 & 0\end{smallmatrix}\right].$ In this case, since only $A_{12}$ is allowed to be perturbed, the eigenvalues of $A(\Delta)$ will always be $\{-1,-2\}$ and cannot lie on the imaginary axis.} Such non-feasible cases are universally robust in the sense that no perturbation with the given sparsity structure can make the system unstable, and thus $r_C = \infty$. To avoid such cases, we make the following assumption:\\
\noindent \textbf{A2:} \textbf{SR} is feasible, i.e., there exists at least one perturbation $\Delta$ that satisfies \eqref{eq:eigv_assgn}-\eqref{eq:spar_const1}.

Finally, since $A(\Delta)$ is real, its eigenvalues are symmetric with respect to the real axis. Hence, if $(\hat{\Delta},\hat{\omega},\hat{x})$ is a local minima of \eqref{eq:opt_cost}, then $(\hat{\Delta},-\hat{\omega},\hat{x}^{*})$ is also a local minima.


\section{Solution to the optimization problem}  \label{sec:soln_opt_prob}
In this section we present the optimality conditions for the local solutions of the optimization problem {\bf SR}, and characterize an optimal perturbation. We use the theory of Lagrange multipliers for equality-constrained minimization to derive the optimality conditions. We begin with a remark on the formalism involving complex variables.

\begin{remark} {\bf (Complex variables)} \label{rem:complex_formalism}
The eigenvalue-eigen- vector constraint in \eqref{eq:eigv_assgn} is a complex-valued constraint and it also induces the following conjugate constraint:
\begin{align} \label{eq:eigv_assgn_conj}
(A+B\Delta C)x^{*}=-j\omega x^{*}. 
\end{align}
We use the formalism wherein a complex number and its conjugate are treated as independent variables \cite{AH-DG:07,DHB:83} and, thus, we treat $x$ and $x^{*}$ as independent variables.  \oprocend
\end{remark}

In the theory of equality-constrained optimization, the first-order optimality conditions are
meaningful only when the optimal point satisfies the following regularity
condition: the Jacobian of the constraints, defined by $J_b$, is full rank. This regularity condition is mild and usually satisfied
for most classes of problems \cite{DGL:84}.  Before presenting the main
result, we derive the Jacobian and state the regularity condition for \textbf{SR}. The derivation of the Jacobian requires the vectorization of the sparsity constraint \eqref{eq:spar_const1}. Let $\delta \triangleq \text{vec}(\Delta)\in\mathbb{R}^{mp}$ and let $n_s$ denote the number of non-trivial sparsity constraints (i.e. number of $1$'s in $S^{c}$). Then, \eqref{eq:spar_const1} can be vectorized as:
\begin{align} \label{eq:spar_const_vec}
\mathsf{S} \delta = 0,
\end{align}
where $\mathsf{S} \in \{0,1\}^{n_s \times mp}$ is a binary matrix given by $\mathsf{S} = [e_{s_1},e_{s_2},\cdots,e_{s_{n_s}}]^{\trans}$ with $\{s_1,\cdots, s_{n_s}\}=$ supp(vec($S^{c}$)) being the set of indices indicating the $1$'{s} in vec($S^{c}$) and $e_i$ being the $i^{\text{th}}$ standard basis vector of $\mathbb{R}^{mp}$.

\noindent Recalling Remark \ref{rem:complex_formalism}, let $z\triangleq [x^{\trans},x^{\conjtrans},\delta^{\trans},\omega]^{\trans}$ be the vector containing all the variables of optimization problem $\textbf{SR}$. 

\begin{lemma}{\bf (Jacobian of the constraints)}\label{lem:jacobian} The Jacobian of the equality constraints \eqref{eq:eigv_assgn},\eqref{eq:eigv_norm}, \eqref{eq:eigv_assgn_conj}, \eqref{eq:spar_const_vec} is given by 
\begin{align*}
 & J_b(z)\!=\! \begin{bmatrix}
A(\Delta) -j\omega I & 0 & (Cx)^{\trans}\!\!\otimes\! B & -jx \\
0 & A(\Delta) + j\omega I   & (Cx^{*})^{\trans}\!\!\otimes\! B & jx^{*}\\
x^{\conjtrans} & x^{\trans}  & 0 & 0\\
0 & 0 & \mathsf{S} & 0
\end{bmatrix}\!\!.
\end{align*}
\end{lemma}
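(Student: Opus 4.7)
The plan is to stack the four equality constraints into a single map $F(z) = (F_1, F_2, F_3, F_4)$ and compute its Jacobian block-by-block, following the independent-variable convention of Remark~\ref{rem:complex_formalism}. Writing
\begin{align*}
F_1(z) &= (A+B\Delta C)x - j\omega x, \\
F_2(z) &= (A+B\Delta C)x^{*} + j\omega x^{*}, \\
F_3(z) &= x^{\conjtrans}x - 1, \\
F_4(z) &= \mathsf{S}\delta,
\end{align*}
the Jacobian $J_b(z)$ acquires a $4 \times 4$ block form whose column groups are indexed by $x$, $x^{*}$, $\delta$, and $\omega$, and I will fill in these blocks one at a time.

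Most of the entries can be read off by inspection. For the $x$, $x^{*}$, and $\omega$ columns of $F_1$, $F_2$, $F_3$ each of these maps is affine in the relevant variable, so property P.7 gives directly: from $F_1$, the blocks $A(\Delta)-j\omega I$, $0$, $-jx$; from $F_2$, the blocks $0$, $A(\Delta)+j\omega I$, $jx^{*}$; and from $F_3$, the blocks $x^{\conjtrans}$ and $x^{\trans}$. The Wirtinger convention of Remark~\ref{rem:complex_formalism} is essential at this step, since without treating $x$ and $x^{*}$ as independent, the derivatives of the complex-valued $F_1, F_2$ and of $x^{\conjtrans}x$ would not be well defined. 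The last row $F_4$ depends only on $\delta$ and contributes $\mathsf{S}$ in the $\delta$-column with zeros elsewhere.

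The only step that requires an actual calculation is the $\delta$-column of $F_1$ (and analogously of $F_2$). The plan there is to express the $\delta$-dependent piece of the constraint as an explicit linear map on $\delta$. Since $B\Delta C x$ is already a column vector, applying property P.3 with the parenthesization $B\,\Delta\,(Cx)$ gives
\[
B\Delta C x = \bigl((Cx)^{\trans}\otimes B\bigr)\,\delta,
\]
from which $\partial F_1/\partial \delta = (Cx)^{\trans}\otimes B$ is immediate; the identical computation with $x$ replaced by $x^{*}$ yields $\partial F_2/\partial \delta = (Cx^{*})^{\trans}\otimes B$. Assembling these blocks together with the zero blocks where variables do not appear reproduces exactly the matrix stated in the lemma. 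The only real obstacle is this single vec/Kronecker bookkeeping step, and specifically choosing the correct grouping in P.3 so that $\Delta$ (rather than $B$ or $C$) plays the role of the ``middle'' factor; once that is fixed, the remainder of the proof is routine partial differentiation.
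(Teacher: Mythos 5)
Your proposal is correct and follows essentially the same route as the paper: both rewrite the perturbation term as $B\Delta Cx = [(Cx)^{\trans}\otimes B]\,\delta$ via \ref{prop:vec2} and then obtain the remaining blocks by direct differentiation of each constraint with respect to $x$, $x^{*}$, $\delta$, and $\omega$ under the independent-conjugate-variable convention of Remark~\ref{rem:complex_formalism}. Your identification of the vec/Kronecker step as the only nontrivial block matches the paper's treatment exactly.
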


\begin{proof}
We construct the Jacobian $J_b$ by taking the derivatives of the constraints \eqref{eq:eigv_assgn}, \eqref{eq:eigv_norm}, \eqref{eq:eigv_assgn_conj} and \eqref{eq:spar_const_vec} with respect to $z$. Using \ref{prop:vec2}, constraint \eqref{eq:eigv_assgn} can be written as
\begin{align}   \label{eq:eigv_assgn_vec}
(A -j\omega I)x + [(Cx)^{\trans}\otimes B]\delta = 0.
\end{align}
Differentiating \eqref{eq:eigv_assgn} w.r.t. $x,\omega$ and \eqref{eq:eigv_assgn_vec} w.r.t $\delta$ yields the first (block) row of $J_b$. Similar derivatives of the conjugate constraint \eqref{eq:eigv_assgn_conj} w.r.t. $z$ yields the second (block) row of $J_b$. Differentiating constraint \eqref{eq:eigv_norm} w.r.t. $x$ and $x^{*}$ yields the third (block) row of $J_b$. Finally, differentiating \eqref{eq:spar_const_vec} w.r.t. $z$ yields the last (block) row of $J_b$. \oprocend
\end{proof}

Next, we provide the local optimality conditions for the optimization problem \textbf{SR}.

\begin{theorem}{\bf (Optimality conditions)}\label{thm:opt_feedback}
Let $(\hat{\Delta}, \hat{x}, \hat{\omega})$ (equivalently $\hat{z} = [\hat{x}^{\trans},\hat{x}^{\conjtrans},\hat{\delta}^{\trans},\hat{\omega}]^{\trans}$) satisfy the constraints \eqref{eq:eigv_assgn}-\eqref{eq:spar_const1}. Then, $(\hat{\Delta}, \hat{x}, \hat{\omega})$ is a local minimum of the optimization problem \textbf{SR}
if and only if
\begin{subequations}
\begin{align} \label{eq:opt_cond1}
\hat{\Delta} = -S\circ \Big[B^{\trans} \text{Re}(\hat{l} &\hat{x}^{\trans}) C^{\trans} \Big], 
\intertext{where $\hat{x}$ and $\hat{l}$ are the right and left eigenvectors of $A(\hat{\Delta})$, respectively, and satisfy}
(A+B\hat{\Delta}C)\hat{x}&=j \hat{\omega}\hat{x},  \label{eq:right_eigvec} \\ \label{eq:left_eigvec}
(A+B\hat{\Delta}C)^{\trans} \hat{l} &=j \hat{\omega}\hat{l} \qquad \text{and,}\\   \label{eq:realness}
\text{Im}(\hat{l}^{\trans}\hat{x}) &= 0, \\ \label{eq:regularity}
J_b(\hat{z})\:\: \text{is full } & \text{rank,}   \\ \label{eq:pos_definite}
P(\hat{z}) \hat{D} P(\hat{z}) & > 0,
\end{align}
\end{subequations}
where $\hat{D}$ is the Hessian defined as
\begin{align} \label{eq:hessian}
\hat{D} \triangleq \begin{bmatrix} 
0 & 0 & \hat{\tilde{L}}^{\conjtrans} & j \hat{l}^{*} \\
0 & 0 & \hat{\tilde{L}}^{\trans} & -j \hat{l} \\
\hat{\tilde{L}} & \hat{\tilde{L}}^{*} & 2I & 0 \\
-j\hat{l}^{\trans} & j \hat{l}^{\conjtrans} & 0 & 0
\end{bmatrix}, 
\end{align}  
with $\hat{\tilde{L}} \triangleq  C \otimes (B^{\trans}\hat{l})$, and $P(z)$ is the projection matrix of $J_b(z)$ given by $P(z) = I-J_b^{+}(z)J_b(z)$.

\end{theorem}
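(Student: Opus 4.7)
The plan is to apply the first- and second-order Karush--Kuhn--Tucker conditions for equality-constrained minimization, using the Wirtinger-type formalism of Remark~\ref{rem:complex_formalism} that treats $x$ and $x^{*}$ as independent variables. I form the Lagrangian
\begin{align*}
\mathcal{L} &= \tfrac12\delta^{\trans}\delta + l^{\trans}\bigl[(A+B\Delta C)x - j\omega x\bigr] \\
 &\quad + l^{\conjtrans}\bigl[(A+B\Delta C)x^{*} + j\omega x^{*}\bigr] \\
 &\quad + \mu\bigl(x^{\conjtrans}x - 1\bigr) + \nu^{\trans}\mathsf{S}\delta,
\end{align*}
with a conjugate pair of complex multipliers $l,l^{*}$ for the eigenvalue--eigenvector constraint and its conjugate (so $\mathcal{L}$ is real), a real scalar $\mu$ for the normalization, and a real vector $\nu\in\real^{n_{s}}$ for the vectorized sparsity constraint.

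The first-order conditions fall out by setting each partial of $\mathcal{L}$ to zero. From $\partial\mathcal{L}/\partial x = 0$ I obtain $(A+B\Delta C)^{\trans}l - j\omega l + \mu x^{*} = 0$; premultiplying by $x^{\trans}$ and using \eqref{eq:right_eigvec} together with $x^{\trans}x^{*}=x^{\conjtrans}x=1$ collapses this to $\mu=0$, whence $l$ is a left eigenvector of $A(\hat\Delta)$ at $j\hat\omega$, establishing \eqref{eq:left_eigvec}. The condition $\partial\mathcal{L}/\partial\omega = 0$ gives $-jl^{\trans}x + jl^{\conjtrans}x^{*} = 2\,\text{Im}(l^{\trans}x)=0$, which is \eqref{eq:realness}. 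For $\partial\mathcal{L}/\partial\delta = 0$ I use \ref{prop:vec2} to rewrite the eigenvector constraint as $(A-j\omega I)x + [(Cx)^{\trans}\otimes B]\delta = 0$ and differentiate, obtaining
\[
\delta + (Cx\otimes B^{\trans})l + (Cx^{*}\otimes B^{\trans})l^{*} + \mathsf{S}^{\trans}\nu = 0,
\]
which, reassembling via the inverse vec identity, yields $\Delta + B^{\trans}\text{Re}(lx^{\trans})C^{\trans} + \mathrm{mat}(\mathsf{S}^{\trans}\nu) = 0$ (after absorbing a factor of two by rescaling $l$). Since $\mathrm{mat}(\mathsf{S}^{\trans}\nu)$ is supported on $S^{c}$ while $\Delta$ is supported on $S$, taking the Hadamard product with $S$ annihilates the multiplier term and produces exactly \eqref{eq:opt_cond1}.

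For the second-order part I invoke the standard sufficient condition for equality-constrained problems: under LICQ, a stationary point is a (strict) local minimum iff the Hessian of $\mathcal{L}$ restricted to the tangent space of the constraints is positive definite. That tangent space is $\Ker J_{b}(\hat z)$ by Lemma~\ref{lem:jacobian}, and $P(\hat z)=I-J_{b}^{+}(\hat z)J_{b}(\hat z)$ projects onto it, so the restricted positive-definiteness statement is exactly $P(\hat z)\hat D P(\hat z)>0$, i.e.\ \eqref{eq:pos_definite}, while the LICQ requirement is \eqref{eq:regularity}. A direct second differentiation of $\mathcal{L}$ then produces the block structure of $\hat D$ in \eqref{eq:hessian}: the $(3,3)$ block comes from the quadratic cost term, the off-diagonal blocks $\hat{\tilde L}=C\otimes(B^{\trans}\hat l)$ come from cross-differentiating the bilinear cross term $l^{\trans}[(Cx)^{\trans}\otimes B]\delta$ via \ref{prop:vec2}, and the $\pm j\hat l$ and $\pm j\hat l^{*}$ blocks come from the $\omega$-coupling terms. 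The main obstacle will be (i) carefully tracking the factors arising from the conjugate pair of eigenvalue constraints so that \eqref{eq:opt_cond1} emerges in its clean form without a stray factor of two, and (ii) rigorously justifying the Wirtinger formalism by transforming to a purely real parametrization of $x$ in order to invoke the classical SOSC, then translating back so that the conjugate-transpose blocks in $\hat D$ and the Jacobian $J_{b}$ correspond exactly to the stated formulas.
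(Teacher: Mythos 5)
Your proposal is correct and follows essentially the same route as the paper: a Lagrangian in the Wirtinger formalism (with conjugate multipliers $l,l^{*}$), premultiplication by $x^{\trans}$ to kill the normalization multiplier, a Hadamard-product argument to eliminate the sparsity multiplier and obtain \eqref{eq:opt_cond1}, and the standard projected-Hessian sufficient condition on $\Ker J_{b}(\hat z)$ for \eqref{eq:pos_definite}. The only differences are cosmetic — you carry the sparsity multiplier in vectorized form $\nu^{\trans}\mathsf{S}\delta$ rather than as $\text{tr}[(M\circ S^{c})^{\trans}\Delta]$, and you absorb the factor of two by rescaling $l$ where the paper instead places a factor $\tfrac12$ on the eigenvalue constraints in the Lagrangian.
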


\begin{proof}
We prove the result using the Lagrange multiplier method for equality-constrained minimization. Let $l\in\mathbb{C}^{n},l^{*}$, $h\in\mathbb{R}$ and $M\in\real^{m \times p}$ be the Lagrange multipliers associated with constraints \eqref{eq:eigv_assgn}, \eqref{eq:eigv_assgn_conj}, \eqref{eq:eigv_norm} and \eqref{eq:spar_const1}, respectively. The Lagrangian function for the optimization problem \textbf{SR} is given by
\begin{align*}
\mc{L} \overset{\ref{prop:frob}}{=} &  \frac{1}{2}\:\text{tr}(\Delta^{\trans} \Delta) + \frac{1}{2} l^{\trans}(A+B\Delta C-j\omega I)x  \\
+ & \frac{1}{2} l^{\conjtrans}(A+B\Delta C+j\omega I)x^{*}  + h(x^{\conjtrans}x-1) \\
&+ \underbrace{1_{m}^{\trans} [M\circ (S^{c}\circ \Delta)]1_p}_{= \text{tr}[(M\circ S^{c})^{\trans} \Delta]}.
\end{align*} 
Next, we derive the first-order necessary conditions for a stationary point of \textbf{SR}. Differentiating $\mc{L}$ w.r.t. $x$ and setting to $0$, we get
\begin{align} \label{eq:l_der_X}
\frac{d}{dx}\mc{L} \overset{\ref{prop:der}}{=} \frac{1}{2} (A+B\Delta C-j\omega I)^{\trans}l+ x^{*}h = 0.
\end{align}
Pre-multiplying \eqref{eq:l_der_X} by $x^{\trans}$, we get
\begin{align*}
&\frac{1}{2}x^{\trans} (A+B\Delta C-j\omega I)^{\trans}l+x^{\trans} x^{*}h = 0, \\ 
& \overset{\eqref{eq:eigv_assgn},\eqref{eq:eigv_norm}}{\Longrightarrow} \: h = 0, 
\end{align*}
and thus,  from \eqref{eq:l_der_X}, we get \eqref{eq:left_eigvec}. Equation \eqref{eq:right_eigvec} is a restatement of \eqref{eq:eigv_assgn} for the optimal $(\hat{\Delta},\hat{x},\hat{\omega})$.

Differentiating $\mc{L}$ w.r.t. $\Delta$ and setting to $0$, we get
\begin{align} \label{eq:l_der_Delta}
\frac{d}{d\Delta}\mc{L} \overset{\ref{prop:der}}{=} \Delta + \text{Re}(B^{\trans}lx^{\trans}C^{\trans})  + M\circ S^{c}= 0. 
\end{align}
Taking the Hadamard product of \eqref{eq:l_der_Delta} with $S^{c}$ and using \eqref{eq:spar_const1} and the fact $S^{c}\circ S^{c} = S^{c}$, we get
\begin{align} \label{eq:l_der_Delta1}
S^{c} \circ \text{Re}(B^{\trans}lx^{\trans}C^{\trans})  + M\circ S^{c}= 0. 
\end{align}
Replacing $M\circ S^{c}$ from \eqref{eq:l_der_Delta1} in \eqref{eq:l_der_Delta}, we get \eqref{eq:opt_cond1} since $B$ and $C$ are real.

Finally, differentiating $\mc{L}$ w.r.t. $\omega$ and setting to $0$, we get
\begin{align*}
\frac{d}{d\omega}\mc{L} = \frac{1}{2}jl^{\conjtrans} x^{*} - \frac{1}{2}jl^{\trans} x = \text{Im}(l^{\trans}x) = 0. 
\end{align*}
Equation \eqref{eq:regularity} is the necessary regularity condition and follows from Lemma \ref{lem:jacobian}.
 
Next, we derive the second-order sufficient condition for a local minimum by calculating the Hessian of $\mc{L}$ w.r.t. $z$. Taking the differential of $\mc{L}$ twice, we get
\begin{align*}
d^2 \mc{L} &= \text{tr}((d\Delta)^{\trans}d\Delta) + l^{\trans}Bd\Delta C dx - j (d\omega)  l^{\trans}dx \\
&+ l^{\conjtrans}Bd\Delta C  dx^{*}  +j(d\omega) l^{\conjtrans}  dx^{*} +2h (dx)^{\conjtrans} dx \\
 \overset{\ref{prop:frob},h=0}{=}  & (d\delta)^{\trans}d\delta + \text{vec}^{\trans}(C^{\trans}(d\Delta)^{\trans}B^{\trans}l)dx - j (d\omega)  l^{\trans}dx \\
 &+ \text{vec}^{\trans}(C^{\trans}(d\Delta)^{\trans}B^{\trans}l^{*})dx^{*} + j(d\omega) l^{\conjtrans}  dx^{*} \\
 \overset{\ref{prop:vec1},\ref{prop:kron}}{=} & (d\delta)^{\trans}d\delta+ (d\delta)^{\trans}\tilde{L}dx - j (d\omega)  l^{\trans}dx \\
 &+ (d\delta)^{\trans}\tilde{L}^{*}dx^{*} + j(d\omega) l^{\conjtrans}  dx^{*} \\
 = &\frac{1}{2} [(dx)^{\conjtrans},(dx)^{\trans}, (d\delta)^{\trans}, d\omega] D \begin{bmatrix} dx\\dx^{*}\\d\delta\\d\omega \end{bmatrix},
\end{align*}
where $D$ is the Hessian (c.f. \ref{prop:grad_Hess}) defined in \eqref{eq:hessian}.
The sufficient second-order optimality condition for the optimization problem requires the Hessian to be positive-definite in the kernel of the Jacobian at the optimal point \cite{DGL:84}. That is,  
$y^\trans D y > 0, \; \forall y: J_b(z)y = 0$. This condition is equivalent to $P^{\trans}(z) D P(z) > 0$, since $J_b(z)y = 0$ if and only if $y = P(z)s$ for a complex $s$ \cite{DGL:84}. Since the projection matrix $P(z)$ is symmetric, \eqref{eq:pos_definite} follows, and the proof is complete. \oprocend
\end{proof}

The local optimality conditions in Theorem \ref{thm:opt_feedback} reveal the inherent properties of an optimal perturbation and the stability radius. First, \eqref{eq:opt_cond1} presents the explicit relations between the optimal perturbation $\hat{\Delta}$ and left and right eigenvectors of the optimally perturbed matrix $A(\hat{\Delta})$. Second, \eqref{eq:realness} shows that the inner product of the left-conjugate and right eigenvectors of the optimal perturbation is always real. Third, notice that the optimal perturbation in \eqref{eq:opt_cond1} always satisfies the sparsity constraint \eqref{eq:spar_const1} (since $S\circ S^{c} = 0$). 

 

The optimality condition \eqref{eq:opt_cond1} can also be re-written as $\hat{\Delta} = -S \circ [B^{\trans}\hat{L}\hat{X}^{\trans}C^{\trans}]$, where $\hat{L}\triangleq [\text{Re}(\hat{l}),-\text{Im}(\hat{l})]$ and $\hat{X}\triangleq [\text{Re}(\hat{x}),\text{Im}(\hat{x})]$. This shows that, although \\ \noindent rank$(B^{\trans}\hat{L}\hat{X}^{\trans}C^{\trans})\leq 2$, the rank of $\hat{\Delta}$ can be greater than $2$. In contrast, the optimal perturbation for real, \emph{non-sparse}, $2$-norm/$F$-norm SR always has rank less that or equal to $2$ \cite{LQ-BB-AR-EJD-PMY-JCD:95}.


\begin{remark} {\bf (Eigenvector normalization constraints)} \label{rem:norm_const}
From the proof of Theorem \ref{thm:opt_feedback}, we observe that the Lagrange multiplier $h$ associated with the eigenvector normalization constraint \eqref{eq:eigv_norm} is zero. This implies that these constraints are 
redundant. Further, observe from \eqref{eq:opt_cond1} that any pair of left and right eigenvectors $\{\hat{l}/\beta,\hat{x}\beta\}$ with $\beta \in \mathbb{C}\backslash\{0\}$, will result in the same perturbation matrix $\hat{\Delta}$. 
We can always choose a suitable $\beta$ to normalize $\hat{x}$, and thus, we ignore these eigenvector normalization constraints in the remainder of the paper. \oprocend
\end{remark}

The solution of the optimization problem \textbf{SR} can be obtained by numerically/iteratively solving the optimality equations \eqref{eq:opt_cond1}-\eqref{eq:realness} using any non-linear equation solving technique. The regularity and local minimum property of the solution can be verified using \eqref{eq:regularity} and \eqref{eq:pos_definite}, respectively. Finally, the local minima should be verified against $\alpha(A(\hat{\Delta})) = 0$. Since the optimization problem is not convex, only local minima can be obtained via this procedure. To improve upon the local solutions and to capture the global minimum,  the above procedure can be repeated for several different initial conditions. Clearly, finding the global minimum is not guaranteed in all cases. In this case, the procedure provides an upper bound to the SR.  

In this paper, instead of solving the optimality equations, we use a penalty based approach using gradient descent to obtain the local solutions. Details of this approach and the corresponding algorithm are provided in the next section.

\section{Gradient based solution algorithm}   \label{sec:grad_algo}
In this section, we present an iterative gradient based algorithm to obtain a local solution to the optimization problem \textbf{SR}. We use the penalty based optimization approach and the Sylvester equation based parametrization to convert the constrained optimization problem \eqref{eq:opt_cost} into an unconstrained optimization problem. Note that we ignore the eigenvector normalization constraints \eqref{eq:eigv_norm} (c.f. Remark \ref{rem:norm_const}) hereafter.

We begin by reformulating \eqref{eq:eigv_assgn} as a purely real constraint. Let $X\triangleq [\text{Re}(x),\text{Im}(x)]\in \real^{n \times 2}$. Then, \eqref{eq:eigv_assgn} is equivalent to
\begin{align} \label{eq:eigv_assgn_real}
(A+B\Delta C)X = \omega X \bar{I},
\end{align}
where $\bar{I} \triangleq \left[ \begin{smallmatrix} 0 & 1\\ -1 & 0 \end{smallmatrix}\right ]$. Next, we use the Sylvester equation based parametrization \cite{SPB-EDS:82} and define $G\triangleq \Delta CX \in \real^{m \times 2}$. It follows that \eqref{eq:eigv_assgn_real} is equivalent to
\begin{subequations}
\begin{align} \label{eq:eigv_assgn_syl_1}
AX-\omega X\bar{I} &= -BG, \\ \label{eq:eigv_assgn_syl_2}
G&= \Delta CX.
\end{align}
\end{subequations}
Note that \eqref{eq:eigv_assgn_syl_1} is a Sylvester equation in $X$. Due to Assumption \textbf{A1}, $\Gamma(A)\cap \Gamma(-\omega \bar{I})=\emptyset$ and, thus, \eqref{eq:eigv_assgn_syl_1} has a unique solution for any given $(G,\omega)$. Further, for any $G$,  \eqref{eq:eigv_assgn_syl_2} has a solution if $CX\in \real^{p\times 2}$ is rank two. Thus, we make the following assumption.

\noindent \textbf{A3:} For a given $(G,\omega)$, $CX$ is full column rank,\footnote{This requires $p\geq 2$.} where $X$ is the unique solution of \eqref{eq:eigv_assgn_syl_1}. 

Under Assumption $\textbf{A3}$, we can solve \eqref{eq:eigv_assgn_syl_2} to obtain $\Delta$, which, in general, may not be unique. Since we wish to minimize the norm of the perturbation, we choose the unique minimum norm solution of  \eqref{eq:eigv_assgn_syl_2}, which is given by $\Delta = G(CX)^{+}$. To summarize, using the Sylvester equation based parametrization, we can freely vary $(G,\omega)$ (under Assumption \textbf{A3}) and compute the corresponding $X$ using \eqref{eq:eigv_assgn_syl_1} and $\Delta = G(CX)^{+}$. 

Next, we use the penalty based optimization approach \cite{DGL:84} and modify the cost function to include a penalty when the sparsity constraints are violated. The penalty is imposed by weighing individual entries of the perturbation using a weighing matrix $W\in \real^ {m \times p}$ given by
\begin{align*}
W_{ij} = 
\begin{cases}
1  \quad &\text{if} \:\:\  S_{ij} = 1, \:\text{and} \\
{\bf w} \gg 1 \quad &\text{if} \:\:\ S_{ij} = 0.
\end{cases}
\end{align*}
Using the weighing matrix $W$, the modified cost becomes 
\begin{align*}
J_{W} = \frac{1}{2}\:||W \circ \Delta||_F^2 \overset{(a)}{=} \frac{1}{2}\: ||\Delta||_F^2 + \frac{1}{2} ({\bf w}^2-1) ||S^{c}\circ \Delta||_F^2,
\end{align*}
where $(a)$ follows from (i) $W = 1_{m\times p} + ({\bf w}-1) S^{c}$ and (ii) $\text{tr}(\Delta^{\trans} (S^{c} \circ \Delta))=\text{tr}((S^{c} \circ\Delta)^{\trans} (S^{c} \circ \Delta))$. Using the penalized cost and Sylvester equation based parametrization, the constrained optimization problem \eqref{eq:opt_cost} can be reformulated as as the following unconstrained optimization problem in variables $G,\omega$:
\begin{align} \label{eq:opt_cost_repar}
 \hspace{30pt}\underset{G  \in \real^{m\times 2}, \:\omega\in \real}{\min} \quad J_W= \frac{1}{2}\:||W\circ \Delta||_F^2  \hspace{30pt}
\end{align}
\ajustspaceandequationnumber
\begin{subequations}
\begin{align}  \label{eq:eigv_assgn_repar}
\text{s.t.} \quad  AX-\omega X\bar{I} &= -BG, \\ \label{eq:fg}
 \Delta &=G(CX)^{+}.
\end{align}
\end{subequations}

We aim to solve the unconstrained problem \eqref{eq:opt_cost_repar} using a gradient descent approach. The next result provides analytical expressions for the gradient and Hessian of the cost in \eqref{eq:opt_cost_repar}. Let $g\triangleq \text{vec}(G)\in\real^{2m},x_v = \text{vec}(X)\in\real^{2n}$, and let the free variables of \eqref{eq:opt_cost_repar} be denoted by $\bar{z} \triangleq [g^{\trans},\omega]^{\trans}$.

\begin{lemma}  {(\bf Gradient and Hessian)} \label{lem:grad_Hess}
The gradient and Hessian of the cost $J_W$ in \eqref{eq:opt_cost_repar} are given by
\begin{align} \label{eq:grad}
&\frac{dJ_W}{d\bar{z}} = \underbrace{\begin{bmatrix}\tilde{X}^{+}(I+\tilde{\Delta} \tilde{A}(\omega)^{-1} \tilde{B}) & \tilde{X}^{+}\tilde{\Delta} \tilde{A}(\omega)^{-1} \tilde{I} x_v \end{bmatrix}^{\trans}}_{\textstyle \triangleq Z(\Delta,X,\omega)} \bar{W} \: \delta,\\ \label{eq:Hess}
&\frac{d^{2}J}{d^{2}\bar{z}} \triangleq H(\Delta,X,\omega) = Z\bar{W}Z^{\trans}\! + M\! + M^{\trans}\!, \quad \text{where} \\
&\tilde{X} = ((CX)^{\trans}\otimes I), \quad \tilde{B} = I \otimes B, \quad \tilde{\Delta} = I\otimes (\Delta C), \nonumber \\
&\tilde{I} = \bar{I}\otimes I,\quad \tilde{A}(\omega) = I \otimes A + \omega(\bar{I}\otimes I), \nonumber \\
&e_{2m+1}\! = \![0,\cdots,0,1]^{\trans}\!\in \mathbb{R}^{2m+1},  \bar{W} \triangleq \text{diag}(\text{vec}(W\!\circ\! W)), \nonumber \\
M &\!=\! \begin{bmatrix}\tilde{B} & \tilde{I}x_v \end{bmatrix}^{\trans} \!\!\!\tilde{A}(\omega)^{-\trans}\! \Big[((CX)^{+} (W\!\circ \!W\!\circ\!\Delta)^{\trans}\! \otimes\! C^{\trans} ) T_{m,p} Z^{\trans} \nonumber \\ \nonumber
&\qquad\qquad\qquad\qquad\quad - \tilde{I}^{\trans} \tilde{A}(\omega)^{-\trans} \tilde{\Delta}^{\trans} (\tilde{X}^{+})^{\trans}\bar{W}\delta e^{\trans}_{2m+1}\Big].
\end{align}
\end{lemma}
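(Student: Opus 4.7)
The plan is a chain-rule computation, propagating derivatives through the two parametrization relations \eqref{eq:eigv_assgn_syl_1} and \eqref{eq:eigv_assgn_syl_2}. Since $J_W = \frac{1}{2}\delta^{\trans}\bar{W}\delta$, the gradient and Hessian w.r.t.\ $\bar{z}$ reduce to computing the first and second differentials of $\delta(\bar{z})$.

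First, I would vectorize the Sylvester equation \eqref{eq:eigv_assgn_syl_1}: applying \ref{prop:vec2} together with $\bar{I}^{\trans}=-\bar{I}$ yields $\tilde{A}(\omega)\,x_v = -\tilde{B}\,g$. Assumption \textbf{A1} makes $\tilde{A}(\omega)$ invertible, so $x_v = -\tilde{A}(\omega)^{-1}\tilde{B}\,g$, and using $d\tilde{A}^{-1} = -\tilde{A}^{-1}\tilde{I}\tilde{A}^{-1}d\omega$ gives
\[
dx_v = -\tilde{A}(\omega)^{-1}\bigl[\tilde{B}\,dg + \tilde{I}\,x_v\,d\omega\bigr].
\]
Second, I would handle $d\delta$. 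Rewriting \eqref{eq:eigv_assgn_syl_2} as $G = \Delta CX$ and vectorizing via \ref{prop:vec1} yields $g = \tilde{X}\delta$. Differentiating both sides (treating $\delta$ and $X$ as variables) gives $dg = \tilde{X}\,d\delta + (d\tilde{X})\delta$, and a short vec computation identifies $(d\tilde{X})\delta = \tilde{\Delta}\,dx_v$. Under Assumption \textbf{A3}, $\tilde{X}$ has full row rank, so the minimum-norm solution is $d\delta = \tilde{X}^{+}(dg - \tilde{\Delta}\,dx_v)$. Substituting the expression for $dx_v$ and collecting coefficients of $dg$ and $d\omega$ gives $d\delta = Z^{\trans}d\bar{z}$ with $Z$ exactly as in \eqref{eq:grad}. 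Then $dJ_W = \delta^{\trans}\bar{W}\,d\delta = (d\bar{z})^{\trans}Z\bar{W}\delta$ and \ref{prop:grad_Hess} yield \eqref{eq:grad}.

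Third, I would compute the Hessian by differentiating once more: $d^2 J_W = (d\delta)^{\trans}\bar{W}\,d\delta + \delta^{\trans}\bar{W}\,d^2\delta$. The first summand is exactly $(d\bar{z})^{\trans}Z\bar{W}Z^{\trans}(d\bar{z})$, giving the Gauss--Newton-like block. For the second, I would expand $d^2\delta = d(\tilde{X}^{+})(dg-\tilde{\Delta}\,dx_v) + \tilde{X}^{+}[-(d\tilde{\Delta})\,dx_v - \tilde{\Delta}\,d^2 x_v]$. The derivative of $\tilde{X}^{+}$ and of $\tilde{\Delta}$ brings in $C\,dX$; re-expressing this as a function of $d\delta$ via the already-established Jacobians, combined with $\bar{W}\delta$, requires property \ref{prop:per_mat} to swap $\text{vec}(\Delta)$ and $\text{vec}(\Delta^{\trans})$, producing the block $\bigl((CX)^{+}(W\circ W\circ\Delta)^{\trans}\!\otimes C^{\trans}\bigr) T_{m,p} Z^{\trans}$ inside $M$. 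The $d^2 x_v$ contribution, obtained by differentiating $\tilde{A}(\omega)^{-1}$ a second time in $\omega$, is supported only on the last coordinate of $\bar{z}$, producing the factor $e_{2m+1}^{\trans}$ in $M$. Both contributions appear pre-multiplied by $[\tilde{B},\,\tilde{I}\,x_v]^{\trans}\tilde{A}^{-\trans}$, which is (minus) the transpose of the Jacobian $dx_v/d\bar{z}$. Since the Hessian must be symmetric, the aggregated second-order contribution takes the form $M + M^{\trans}$ with $M$ as claimed, establishing \eqref{eq:Hess}.

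The main obstacle is the bookkeeping in this last step: tracking each piece of $d^2\delta$ through the pseudoinverse $\tilde{X}^{+}$ and the Kronecker-structured matrices, and invoking \ref{prop:per_mat} to re-express $\text{vec}(\Delta^{\trans})$ in terms of $\text{vec}(\Delta)$. Once each term is matched, the symmetrization into $M + M^{\trans}$ is routine, and the Hessian formula follows.
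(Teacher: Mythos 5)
Your proposal is correct in outline and follows essentially the same route as the paper: vectorize the two parametrization relations, express $dx_v$ and $d\delta$ as linear functions of $d\bar{z}$ (your $Z^{\trans}$ matches the paper's), and split $d^{2}J_W$ into the Gauss--Newton block $Z\bar{W}Z^{\trans}$ plus a second-differential contribution symmetrized into $M+M^{\trans}$. The one substantive divergence is your treatment of $d^{2}\delta$. You propose to differentiate the explicit formula $d\delta=\tilde{X}^{+}(dg-\tilde{\Delta}\,dx_v)$, which forces you through $d(\tilde{X}^{+})$; for a full-row-rank, non-square $\tilde{X}$ this differential is $-\tilde{X}^{+}(d\tilde{X})\tilde{X}^{+}+(I-\tilde{X}^{+}\tilde{X})(d\tilde{X})^{\trans}(\tilde{X}\tilde{X}^{\trans})^{-1}$, and the second (null-space) piece does not die when contracted against $\bar{W}\delta$, because $\bar{W}$ is a non-uniform diagonal matrix and $\bar{W}\delta$ need not lie in the row space of $\tilde{X}$ even though $\delta=\tilde{X}^{+}g$ does. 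The paper never meets this term: it differentiates the constraint $G=\Delta CX$ a second time to get $d^{2}\Delta\,CX+\Delta C\,d^{2}X+2\,d\Delta\,C\,dX=0$ and only then applies $\tilde{X}^{+}$ once to take the minimum-norm $d^{2}\delta$; the explicit factor $2$ on the cross term is also what yields $2\,d\bar{z}^{\trans}Md\bar{z}=d\bar{z}^{\trans}(M+M^{\trans})d\bar{z}$ directly, rather than by the somewhat hand-waving appeal to symmetry of the Hessian. So to land exactly on the stated $M$ you should either truncate $d(\tilde{X}^{+})$ to $-\tilde{X}^{+}(d\tilde{X})\tilde{X}^{+}$ (i.e., adopt the same minimum-norm convention at second order that you already adopted at first order) or reorder the computation as the paper does; otherwise your bookkeeping produces an extra term not present in the lemma. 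The remaining ingredients you identify --- the $e_{2m+1}$ factor from $d^{2}x_v$ being supported on $d\omega$, the use of \ref{prop:per_mat} to rewrite $\mathrm{vec}((d\Delta)^{\trans})$ in terms of $d\delta$, and the prefactor $[\tilde{B},\ \tilde{I}x_v]^{\trans}\tilde{A}(\omega)^{-\trans}$ --- all match the paper's proof.
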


\begin{proof}
Taking differential of \eqref{eq:eigv_assgn_repar}, we get 
\begin{align} \label{eq:diff_XG}
AdX - \omega dX \bar{I} - d\omega X \bar{I} = - BdG.
\end{align}
Vectorizing \eqref{eq:diff_XG} using \ref{prop:vec1}, and using $\bar{I}^{\trans} = -\bar{I}$, we get
\begin{align}
(I\otimes A)dx_v &+  \omega(\bar{I}\otimes I)dx_v + (\bar{I}\otimes I)x_vd\omega = -(I\otimes B)dg  \nonumber \\ 
\Rightarrow dx_v &= -\tilde{A}^{-1}(\omega)(\tilde{B}dg+(\bar{I}\otimes I)x_vd\omega) \nonumber \\ \label{eq:dx}
& = - \underbrace{\tilde{A}(\omega)^{-1} \begin{bmatrix}\tilde{B} & \tilde{I}x_v \end{bmatrix}}_{\textstyle \triangleq Y}  \underbrace{\left[\begin{smallmatrix} dg\\d\omega \end{smallmatrix}\right]}_{\textstyle d\bar{z}}.
\end{align}
Taking the differential of \eqref{eq:eigv_assgn_syl_2} and vectorizing yields
\begin{align}
d\Delta  CX& + \Delta C dX = dG \label{eq:dG}\\
\overset{\ref{prop:vec1}}{\Rightarrow} \tilde{X} d\delta &= dg-\tilde{\Delta} dx_v \nonumber \\ \label{eq:d_delta}
\Rightarrow  d\delta &= \tilde{X}^{+}(dg-\tilde{\Delta} dx_v)  \overset{\eqref{eq:dx}}{=} Z^{\trans}(\Delta,X,\omega) d\bar{z}. 
\end{align}
Now $J_{W} \overset{\ref{prop:frob},\ref{prop:had2}}{=} \frac{1}{2} (\text{vec}(W)\circ \delta)^{\trans}(\text{vec}(W)\circ \delta) =\frac{1}{2} \delta^{\trans}\bar{W}\delta$. Thus, $dJ_{W} = \delta^{\trans}\bar{W}(d\delta)$. Using \eqref{eq:d_delta} and \ref{prop:grad_Hess}, we get the gradient in \eqref{eq:grad}.

Next we derive the Hessian of $J_W$. Taking the differential of $dJ_{W} = \delta^{\trans}\bar{W}(d\delta)$, we get
\begin{align} \label{eq:dd_J}
 d^{2}J_W = (d\delta)^{\trans}\bar{W}(d\delta) + (d^{2}\delta)^{\trans} \bar{W}\delta.
\end{align}
Since $G$ and $\omega$ are free variables, their second order differentials $d^{2}G$ and $d^{2}\omega$ are zero \cite{JRM-HN:99}. Taking the differential of \eqref{eq:diff_XG} and vectorizing, we have
\begin{align}
&Ad^{2}X - \omega d^{2}X \bar{I} - 2d\omega dX \bar{I} =0 \nonumber \\ \label{eq:dd_x}
&\overset{\ref{prop:vec1}}{\Rightarrow} d^{2} x_v = -2d\omega \tilde{A}^{-1}(\omega) \tilde{I} dx_v\! \overset{\eqref{eq:dx}}{=} \!2 d\omega \tilde{A}^{-1}(\omega) \tilde{I} Y d\bar{z}.
\end{align}
Taking the differential of \eqref{eq:dG} and vectorizing yields
\begin{align}
&d^{2}\Delta  CX + \Delta C d^{2}X + 2d\Delta C dX = 0 \nonumber \\
&\overset{\ref{prop:vec1}}{\Rightarrow} \tilde{X} d^{2}\delta + \tilde{\Delta} d^{2}x_v + 2 (I \otimes d\Delta C) dx_v = 0 \nonumber \\
&\overset{\eqref{eq:dx},\eqref{eq:dd_x}}{\Rightarrow} d^{2}\delta= 2\tilde{X}^{+} [(I \otimes d\Delta C) - d\omega \tilde{\Delta} \tilde{A}(\omega)^{-1} \tilde{I}] Y d\bar{z} \nonumber \\
& \Rightarrow (d^{2}\delta)^{\trans} \bar{W}\delta = 2d\bar{z}^{\trans} Y^{\trans} (I \otimes (d\Delta C)^{\trans}) (\tilde{X}^{+})^{\trans}\bar{W}\delta \nonumber \\ \label{eq:dd_delta_t_delta}
&\hspace*{50pt}- 2d\bar{z}^{\trans} Y^{\trans}  \tilde{I}^{\trans} \tilde{A}(\omega)^{-\trans} \tilde{\Delta}^{\trans} (\tilde{X}^{+})^{\trans}\bar{W}\delta d\omega.
\end{align}
From the first term on right side of \eqref{eq:dd_delta_t_delta}, we have
\begin{align}
&(I \otimes (d\Delta C)^{\trans}) (\tilde{X}^{+})^{\trans} \bar{W}\delta  \nonumber \\ 
&\overset{\ref{prop:kron},\ref{prop:kron1}}{=} (I \otimes (d\Delta C)^{\trans}) ((CX)^{+} \otimes I) \bar{W}\delta \nonumber \\
& \overset{\ref{prop:kron1}}{=} ((CX)^{+} \otimes (d\Delta C)^{\trans}) \bar{W}\delta \nonumber \\ 
&\overset{(a)}{=} \text{vec}(C^{\trans} (d\Delta)^{\trans} (W\circ W \circ\Delta) ((CX)^{+})^{\trans}) \nonumber \\
&\overset{\ref{prop:vec2},\ref{prop:per_mat}}{=} ((CX)^{+} (W\circ W\circ \Delta)^{\trans} \otimes C^{\trans} ) T_{m,p} d\delta \nonumber \\ \label{eq:dd_delta_t_delta_p1}
& \overset{\eqref{eq:d_delta}}{=}\! ((CX)^{+} (W\!\circ\! W\!\circ\! \Delta)^{\trans} \!\otimes\! C^{\trans} ) T_{m,p} Z^{\trans}(\Delta,X,\omega) d\bar{z},
\end{align}
where $(a)$ follows from \ref{prop:vec2} and the relation $\text{vec}(W\circ W \circ \Delta) = \bar{W} \delta$.
Substituting \eqref{eq:dd_delta_t_delta_p1} and $d\omega = e_{2m+1}^{\trans} d\bar{z}$ in \eqref{eq:dd_delta_t_delta}, we get
\begin{align} \label{eq:dd_delta_t_delta1}
(d^{2}\delta)^{\trans} \bar{W}\delta = 2d\bar{z}^{\trans} M d\bar{z} = d\bar{z}^{\trans} (M + M^{\trans}) d\bar{z}. 
\end{align}
Substituting \eqref{eq:dd_delta_t_delta1} and \eqref{eq:d_delta} in \eqref{eq:dd_J}, we get
\begin{align} \label{eq:dd_J1}
d^{2}J_W = d\bar{z}^{\trans} (Z \bar{W} Z^{\trans} +M + M^{\trans})d\bar{z}.
\end{align}
Using \eqref{eq:dd_J1} and \ref{prop:grad_Hess}, we get the Hessian in \eqref{eq:Hess}. \oprocend
\end{proof}

Using Lemma \ref{lem:grad_Hess}, we present a gradient/Newton descent Algorithm \ref{algo:grad_New_des} to solve the optimization problem \eqref{eq:opt_cost_repar}. 
\begin{algorithm} \label{algo:grad_New_des}
  \KwIn{$A,B,C,W,g_0,\omega_0.$}
  \KwOut{Local minima $(\Delta,X,\omega)$ of \eqref{eq:opt_cost_repar}.}
  \BlankLine
  \textbf{Initialize:} $\bar{z}_0\!=\!\begin{bmatrix} g_0\\ \omega_0 \end{bmatrix}$,$x_0\leftarrow -\tilde{A}(\omega_0)^{-1}\tilde{B}g_0 ,\delta_0  \leftarrow \tilde{X}_0^{+}g_0$ \\
 \Repeat{\textup{convergence}}
 {\nl $\beta \leftarrow$ Update step size (see below)\label{line:step_size}\;
 \nl  $\bar{z} \leftarrow  \bar{z} - \beta Z(\Delta,X,\omega)\bar{W}\delta\quad$ \textbf{or} \label{line:grad_des}\; 
 \nl $\bar{z} \leftarrow  \bar{z}- \beta [H(\Delta,X,\omega)+V]^{-1} \!Z(\Delta,X,\omega)\bar{W}\delta$\label{line:newton_des}\;
 \vspace{4pt}
 \nl $x \leftarrow -\tilde{A}(\omega)^{-1}\tilde{B}g$ \label{line:x_eval}\;
 \nl $\delta \leftarrow \tilde{X}^{+}g$ \label{line: delta_eval}\;
 }
 \Return{$(\Delta,X,\omega)$}
  \caption{Gradient/Newton descent for \textbf{SR}}
\end{algorithm}
Steps \ref{line:grad_des} and $\ref{line:newton_des}$ of Algorithm \ref{algo:grad_New_des}  represent gradient and damped Newton descent steps, respectively. In the Newton descent step, the Hessian $H(\Delta,X,\omega)$ is required to be positive-definite. To satisfy this property, we add the term $V = \epsilon I - M -M^{\trans}$ to the Hessian with $0<\epsilon \ll 1$ \cite{DGL:84}. Further, the step size $\beta$ can be updated using backtracking line search or Armijo's rule \cite{DGL:84}. In general, the Newton descent converges faster as compared to gradient descent. For a detailed discussion of the two algorithms, the interested reader is referred to \cite{DGL:84}. Further, steps \ref{line:x_eval} and \ref{line: delta_eval} are obtained by vectorizing \eqref{eq:eigv_assgn_repar} and \eqref{eq:fg}, respectively. The computational effort in each iteration of Algorithm \ref{algo:grad_New_des} mainly results from computing the pseudoinverse of $\tilde{X}$, and the inverses of $\tilde{A}(\omega)$ and  $H(\Delta,X,\omega)+V$. Finally, if Assumption \textbf{A3} is not satisfied in any iteration of Algorithm \ref{algo:grad_New_des}, (i.e., $CX$ is not full column rank), then we can slightly modify $(g,\omega)$ to ensure that \textbf{A3} is satisfied and continue the iterations.

\begin{remark} {\bf(Optimality at $\mathbf{\hat{\omega} = 0}$)} Algorithm \ref{algo:grad_New_des} is initialized at some $\omega_0$ and it updates $\omega$ at each step, eventually converging to a locally optimal $\hat{\omega}$. The perturbation $\Delta$ computed at each iteration of algorithm assigns two eigenvalues of $A(\Delta)$ at $\pm j\omega$. As a consequence, in cases where $\hat{\omega} = 0$ is a local minima of \eqref{eq:opt_cost_repar}, the algorithm converges to a perturbation $\hat{\Delta}$ such that $A(\hat{\Delta})$ has eigenvalue $0$ with multiplicity two. Clearly, this is not the optimal solution since instability results from at least one (and not necessarily two) eigenvalue of $A(\hat{\Delta})$ being at the origin. To address this special case ($\hat{\omega} = 0$), we can use the following eigenvalue assignment equation instead of \eqref{eq:eigv_assgn_real}
\begin{align*}
(A+B\Delta C)x = 0,
\end{align*} 
where $x\in\real^{n}$, and develop an algorithm (analogous to Algorithm \ref{algo:grad_New_des}) using a similar Sylvester equation based
 parame-trization method. \oprocend
\end{remark}

\begin{remark}{ \bf(Choice of weights)}
As the weight ${\bf w}$ increases, an optimal solution of \eqref{eq:opt_cost_repar} satisfies the sparsity constraints \eqref{eq:spar_const1} with increasing accuracy. However, an increase in the weights also reduces the convergence speed of Algorithm \ref{algo:grad_New_des}. Thus, there exists a trade-off between the accuracy of the sparse solutions and the convergence time of the algorithm. Convergence theory of gradient/Newton descent and the penalty based method is well established \cite{DGL:84}, and therefore, is not the primary focus of this paper. \oprocend
\end{remark}

\section{Simulation Studies} \label{sec:simulations}
In this section, we present numerical simulation studies of our algorithm. To begin, we consider the following example from \cite{LQ-BB-AR-EJD-PMY-JCD:95}:
\begin{align*}
 \small A &=\begin{bmatrix}
79 &   20 &  -30 &  -20\\
-41 &  -12 &   17 &   13\\
 167 &   40 &  -60 &  -38\\
 33.5 &    9 &  -14.5 & -11
\end{bmatrix}\hspace*{-3pt},
B =\! \begin{bmatrix}
0.2190  &   0.9347\\
0.0470  &  0.3835\\
0.6789  &  0.5194\\
0.6793  &  0.8310\\
\end{bmatrix}\!\!, \\
C &= \begin{bmatrix}
0.0346  &  0.5297 &   0.0077  &  0.0668\\
0.0535  &  0.6711  &  0.3848  &  0.4175
\end{bmatrix}. \normalfont
\end{align*}
The eigenvalues of $A$ are \{$-1\pm j , -1 \pm 10j$\}. We consider two cases:\\
\noindent \emph{Case 1:} No sparsity constraints, i.e., $S = \left[\begin{smallmatrix} 1 & 1\\1 & 1\end{smallmatrix}\right]$,\\
 \noindent \emph{Case 2:} Only the diagonal entries of $\Delta$ are allowed to be perturbed, i.e., $S = \left[\begin{smallmatrix} 1 & 0\\0 & 1\end{smallmatrix}\right]$. 
 
 The weight in the weighing matrix $W$ is chosen as ${\bf w}=100$. Table \ref{tab:local_min} shows the local minima of optimization problem \eqref{eq:opt_cost} obtained by Algorithm \ref{algo:grad_New_des} for both the cases. The first local minimum is also the global minimum for both the cases. Note that the second local minimum for case 1 satisfies $\alpha(A(\hat{\Delta}^{(2)})) = 0$, whereas the second local minimum for case 2 does not satisfy this constraint (c.f. discussion after \eqref{eq:spar_const1}). Thus, it is not a valid local minimum.

Next, we illustrate the relation between the local minima of our optimization problem and the geometry of the spectral value sets. Spectral value set captures the region in which all possible eigenvalues of the perturbed system can lie, and for $\eta\geq 0$, is defined as:
\begin{align*}
\mathcal{S}_{\eta} \triangleq \{\Gamma(A+B\Delta C): ||\Delta||_F\leq \eta, \Delta\in\mathbf{\Delta}_S\}.
\end{align*}
Figure \ref{fig:spec_sets} shows the spectral value sets in the complex plane corresponding to the local minima in Table \ref{tab:local_min}. We observe that the local minima are precisely the cases when the \emph{locally} right-most points of the spectral value sets\footnote{$\mathcal{S}_{\eta}$ for this example was visualized by performing an exhaustive search over $\Delta = \left[ \begin{smallmatrix} \Delta_{11} & 0 \\ 0 & \Delta_{22}\end{smallmatrix}\right]$ such that $\Delta_{11}^2 +  \Delta_{22}^2 \leq \eta^2,$ and plotting $\Gamma(A(\Delta))$.} intersect with the imaginary axis. 



\begin{table}[h!]
\caption{Minima obtained via Algorithm \ref{algo:grad_New_des}}
\begin{center}
\setlength{\tabcolsep}{2pt}
  \begin{tabular}{| l | l | }
    \hline 
&\\[-0.7em]

\multicolumn{1}{|c|}{Case $1$}  & \multicolumn{1}{c|}{Case $2$}   \\ \hline
\rule{0pt}{15pt}$\hat{\Delta}_1= 
\begingroup 
\setlength\arraycolsep{3pt}
 \begin{bmatrix}
-0.0332 &  -0.0717 \\
   \phantom{-} 0.1975 &   \phantom{-}0.4700
\end{bmatrix} \endgroup $
& 
$\hat{\Delta}_1 =
\begingroup 
\setlength\arraycolsep{3pt}
 \begin{bmatrix}
-0.0418 &  0.0000 \\
   0.0000 &   0.5638
\end{bmatrix} \endgroup $  \\

\rule{0pt}{12pt}\hspace{-3pt} $r_C = ||\hat{\Delta}_1||_F = 0.5159$ & $r_C=|| \hat{\Delta}_1||_F = 0.5653$  \\
\rule{0pt}{12pt}$\hat{\omega}_1 =1.3753$ &  $\hat{\omega}_1= 1.3365$ \\ 
 \rule{0pt}{21pt} $\hat{x}_1= \begin{bmatrix} \phantom{-}0.1340 - 0.0022j \\\phantom{-} 0.3692 + 0.0456j \\\phantom{-} 0.7733 + 0.2579j \\-0.2504 - 0.3411j \end{bmatrix}$   &  $\hat{x}_1= \begin{bmatrix} 0.0905 - 0.0971j \\  0.2152 - 0.3108j \\  0.3295 - 0.7459j  \\ 0.0799 + 0.4099j \end{bmatrix}$\\
 \rule{0pt}{21pt} $\hat{l}_1 =\: \begin{bmatrix} -1.3796 + 0.4056j \\ -0.5825 - 0.1855j \\ \phantom{-}0.4576 - 0.1326j \\ \phantom{-}0.2771 - 0.4659j  \end{bmatrix}$ & $\hat{l}_1 =\: \begin{bmatrix} -0.7660 - 1.5362j \\  -0.6177 - 0.3611j  \\ \phantom{-}0.2590 + 0.5098j \\ -0.1785 + 0.6099j \end{bmatrix}$\\
\hline
\rule{0pt}{15pt}$\hat{\Delta}_2= 
\begingroup 
\setlength\arraycolsep{3pt}
\begin{bmatrix}
\phantom{-}0.1841  & \phantom{-}0.5173\\
   -0.8050  & -0.4151
\end{bmatrix}\endgroup$ & 
$\hat{\Delta}_2 = 
\begingroup 
\setlength\arraycolsep{3pt}
\begin{bmatrix}
4.8818  & 0.0000\\
   0.0000  &-0.8898
\end{bmatrix}\endgroup$  \\ 

\rule{0pt}{12pt}$||\hat{\Delta}_2||_F = 1.0592$ & $||\hat{\Delta}_2||_F = 4.9622$  \\ 
\rule{0pt}{12pt}$\hat{\omega}_2 = 10.8758$ &  $\hat{\omega}_2 = 11.0790$ \\ 
\rule{0pt}{21pt}$\hat{x}_2 = \begin{bmatrix} \phantom{-}0.2032 + 0.3252j \\ \phantom{-}0.0505 - 0.2331j \\ \phantom{-}0.7184 + 0.4695j \\ -0.0532 + 0.2381j \end{bmatrix}$ & $\hat{x}_2 = \begin{bmatrix} -0.1927 - 0.3320j \\  \phantom{-}0.2084 + 0.0795j   \\   -0.0770 - 0.8566j \\   -0.2483 - 0.0397j   \end{bmatrix}$\\
\rule{0pt}{21pt}$\hat{l}_2 =\: \begin{bmatrix} \phantom{-}8.4752 + 9.7446j \\ \phantom{-}1.8464 + 2.4744j\\ -3.6880 - 3.0693j \\ -2.4960 - 1.8877j \end{bmatrix}$ & $\hat{l}_2 =\: \begin{bmatrix} \phantom{-}435.71 + 76.99j \\112.50 - 0.98j \\ -153.99 - 56.91j \\ \phantom{-} \phantom{-}95.93 - 42.90j \end{bmatrix}$\\
\hline 
 
\end{tabular}

\end{center}
\label{tab:local_min}
\end{table}

\begin{figure}[h!]
  \centering
  \subfigure[$\eta = 0.5159$]{
  \includegraphics[width=.45\columnwidth]{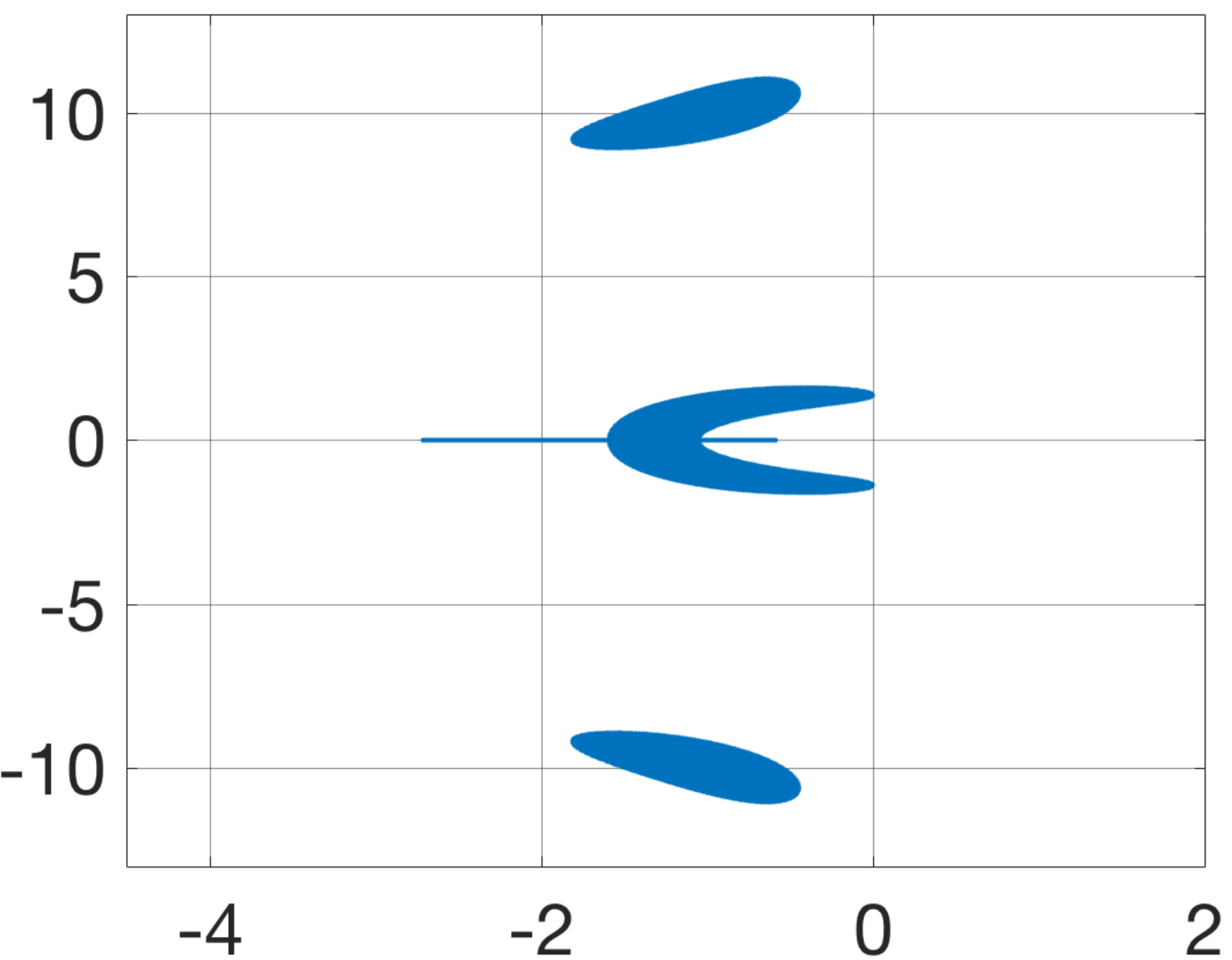} \label{fig:ss1} } 
  \subfigure[$\eta = 1.0592$]{
  \includegraphics[width=.45\columnwidth]{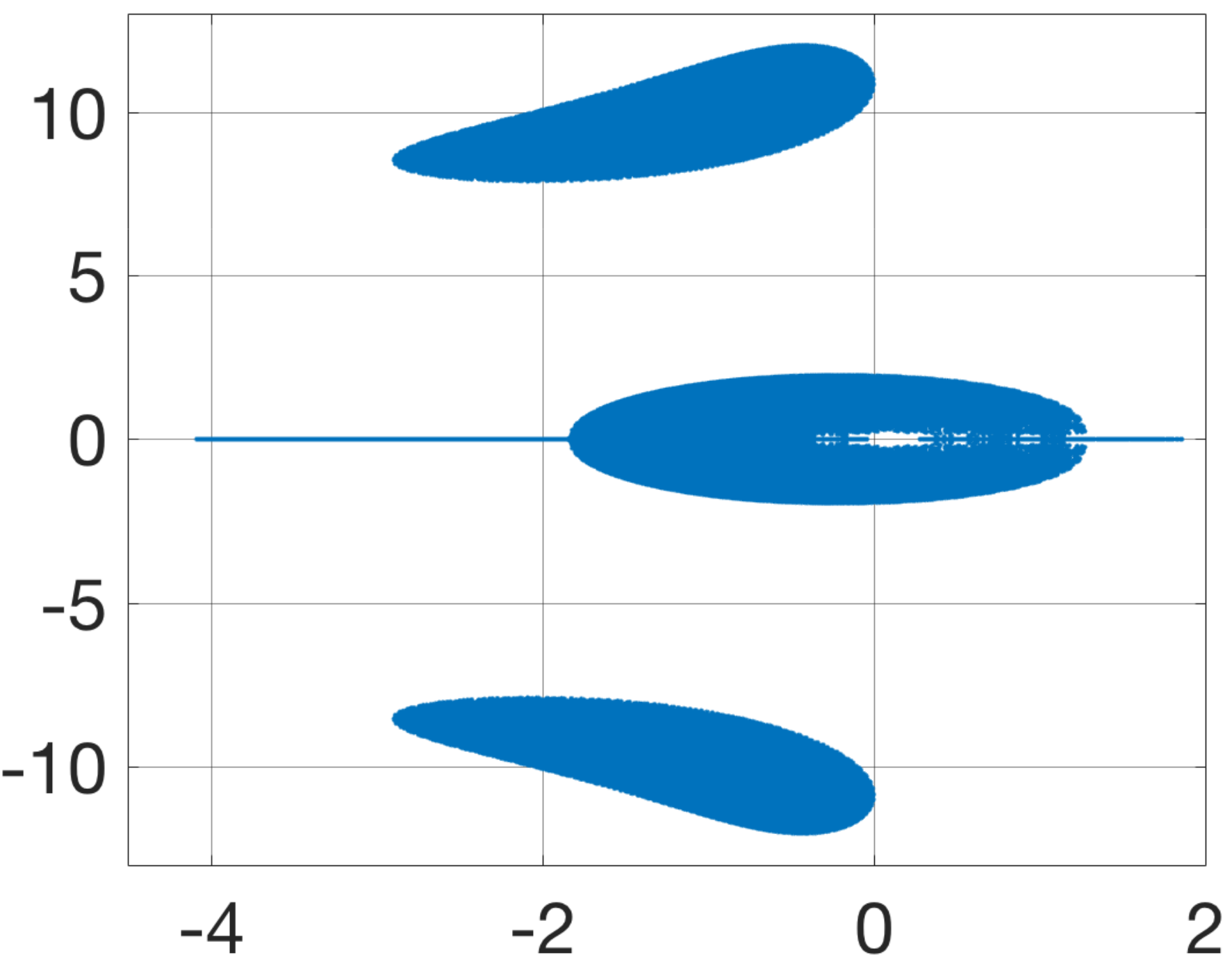} \label{fig:ss2}}
  \subfigure[$\eta = 0.5653$]{
  \includegraphics[width=.45\columnwidth]{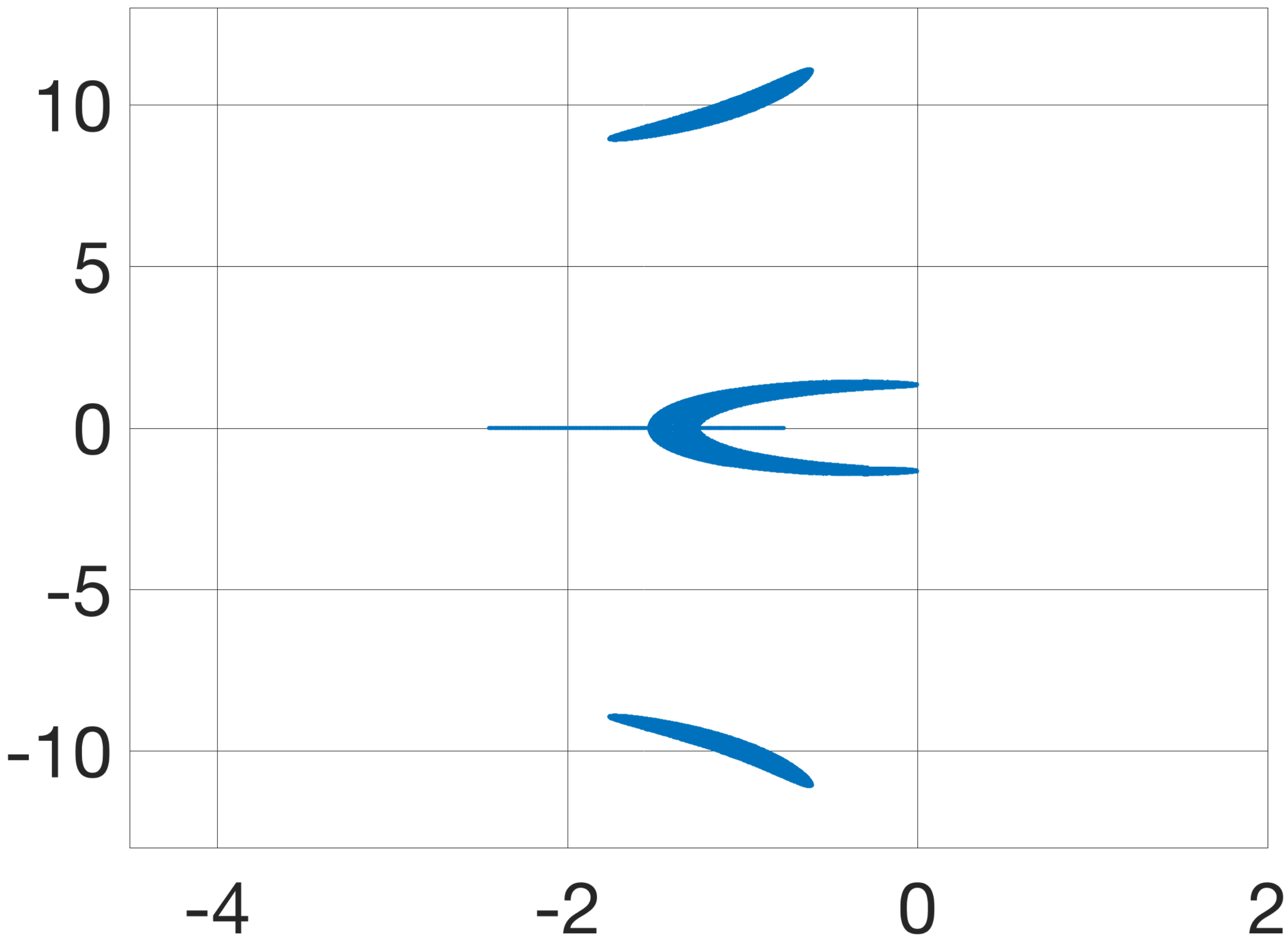} \label{fig:ss3}}
  \subfigure[$\eta = 4.9622$]{
  \includegraphics[width=.45\columnwidth]{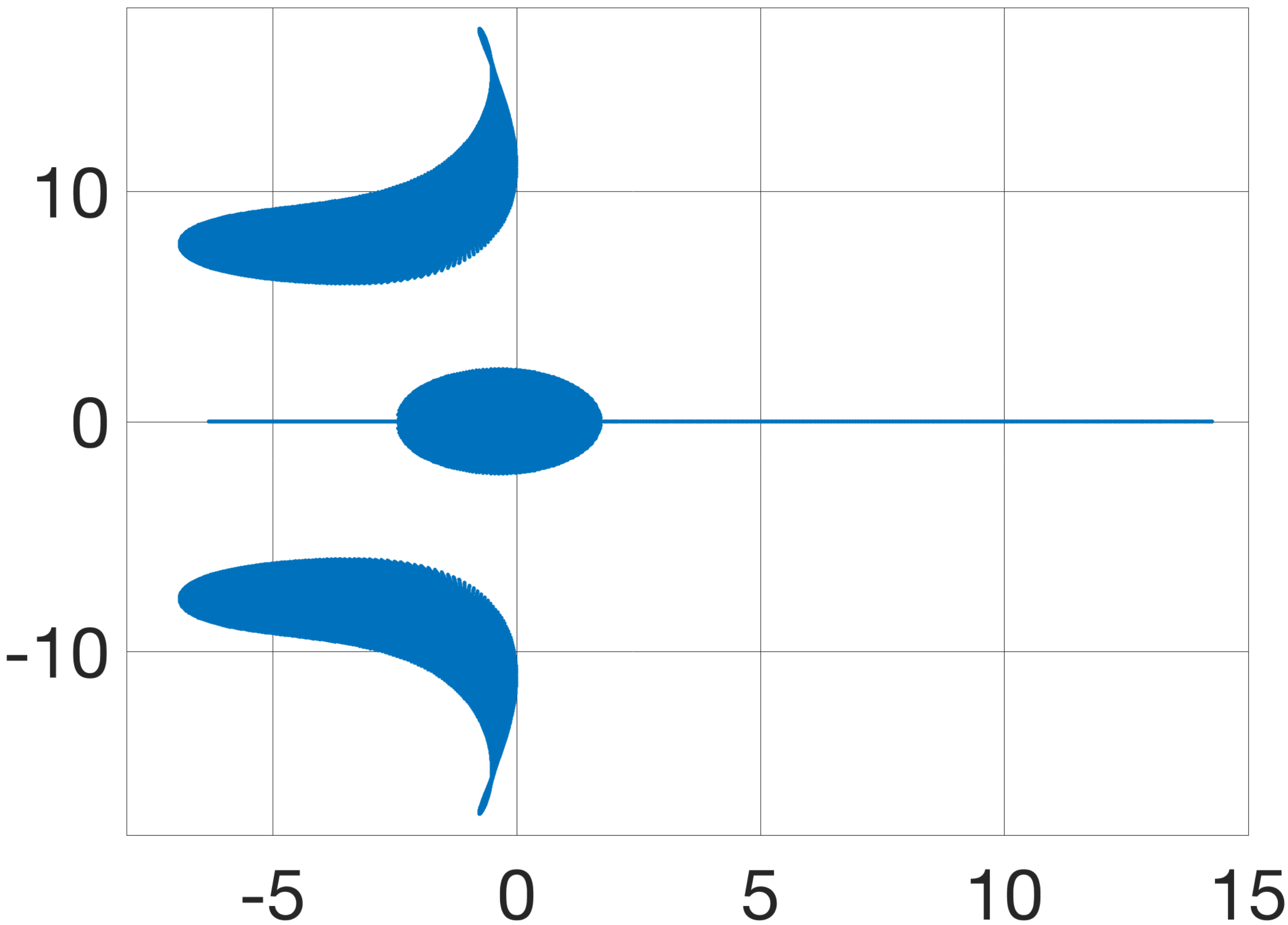} \label{fig:ss4}}
  \caption{The spectral value sets corresponding to the local minima in Table \ref{tab:local_min}. Figures (a)-(b) correspond to Case 1, and (c)-(d) correspond to Case 2.}
  \label{fig:spec_sets}
\end{figure}
  
Figure \ref{fig:iter} presents a sample run of Algorithm \ref{algo:grad_New_des} for the case 2 of the above example. It is initialized with $\omega_0 = 2.5$ and $g_0 = [1.0582, 0.4363, 1.4115, -0.0146]^{\trans}$ and takes $24$ iterations to converge to the global minimum using the Newton descent steps. Figure \ref{fig:iter} shows the penalized cost (scaled), spectral abscissa of the perturbed matrix $A(\Delta)$, and $\omega$ at each iteration. Observe that, at the start of the algorithm, $\alpha(A(\Delta))>0$ indicating that $A(\Delta)$ has two eigenvalues in the right-half complex plane (the other two are at $\pm j\omega$). As the iterations progress, these unstable eigenvalues move towards the left-half plane and, at the global minimum, all eigenvalues are in the closed left-half plane (c.f. discussion after \eqref{eq:spar_const1}). Further, the optimization cost decreases monotonically during the iterations.

\begin{figure}[h!]
\centering
\includegraphics[width=\columnwidth]{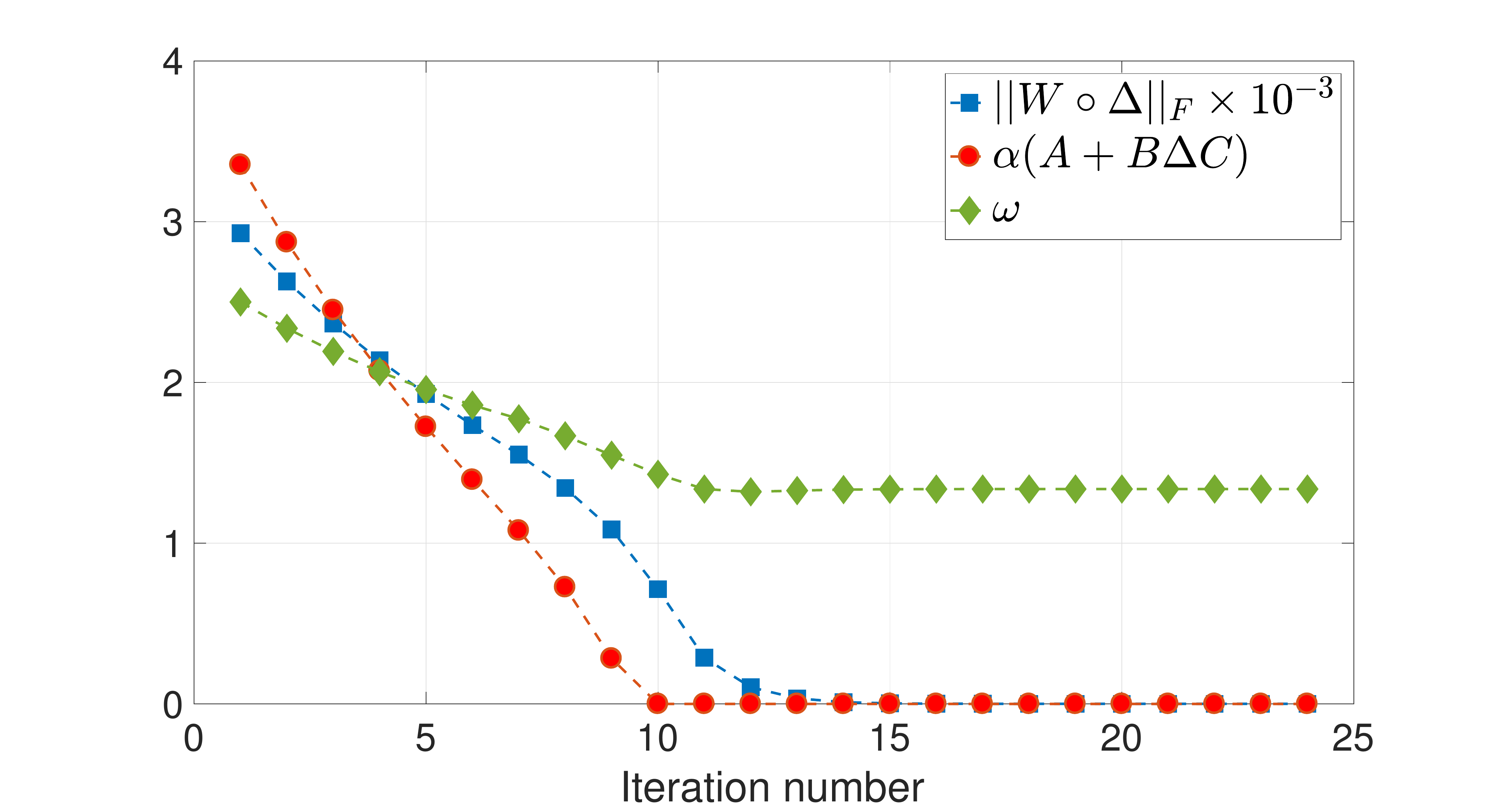}
\caption{A sample iteration run of Algorithm \ref{algo:grad_New_des}.}
\label{fig:iter}
\end{figure}

Next, we present a comparison of the global minima obtained by Algorithm \ref{algo:grad_New_des} for different penalty weights. Table \ref{tab:approx-spar} shows the global minima for three values of weight ${\bf w}$. Further, Figure \ref{fig:approx-spar} shows the sparsity error $E \triangleq ||\Delta - S \circ \Delta||_F$ and the norm of the optimal perturbation as a function of the weight $w$. Observe that, as the weight ${\bf w}$ increases, the sparsity error decreases and the optimal perturbations become more sparse. Furthermore, the norm of the optimal perturbations increases with ${\bf w}$, since a larger weight implies a tighter constraint on the perturbation entries.  

\begin{table}[h!]
\caption{Approximately-sparse Solutions}
\begin{center}
\setlength{\tabcolsep}{10pt}
 
  \begin{tabular}{| c | c |c|c| }
    \hline
${\bf w}$  & $\hat{\Delta}$ & $||\hat{\Delta}||_F$ & $\hat{\omega}$ \\ \hline 
\rule{0pt}{15pt}$5$ &
$\begingroup 
\setlength\arraycolsep{3pt}
\begin{bmatrix}
 -0.0414 &  -0.0036\\
    \phantom{-}0.0095  &  \phantom{-} 0.5593
\end{bmatrix} \endgroup$ &  $ 0.5609$ & $1.3385$ \\ [1em]\hline

\rule{0pt}{15pt}$10$ &
$\begingroup 
\setlength\arraycolsep{3pt}
\begin{bmatrix}
-0.0417 &  -0.0009\\
    \phantom{-} 0.0024  &  \phantom{-} 0.5627
\end{bmatrix}\endgroup$ &  $0.5642$ &  $1.3370$ \\[1em] \hline

\rule{0pt}{15pt}$20$ &
$\begingroup 
\setlength\arraycolsep{3pt}
\begin{bmatrix}
-0.0418 &  -0.0002\\
  \phantom{-}   0.0006  &  \phantom{-} 0.5635
\end{bmatrix}\endgroup$ &  $0.5651$ &  $1.3367$ \\[1em] \hline

\end{tabular}
\end{center}
\label{tab:approx-spar}
\end{table}

%
%
%
%
%

\begin{figure}[h!]
  \centering
  \subfigure[]{
 \hspace{1pt} \includegraphics[width=.75\columnwidth]{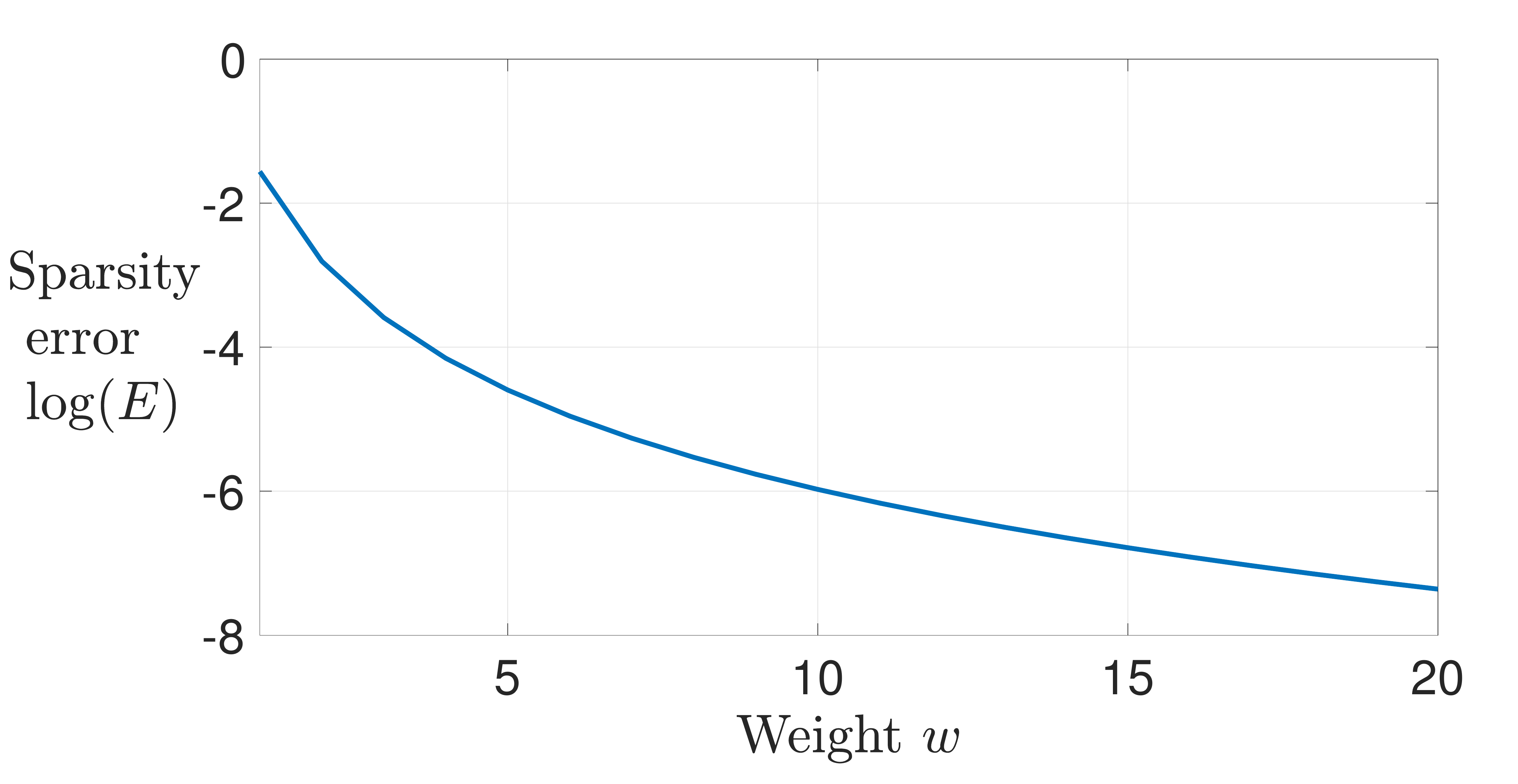} \label{fig:approx-spar_a}}
  \subfigure[]{
 \includegraphics[width=.75\columnwidth]{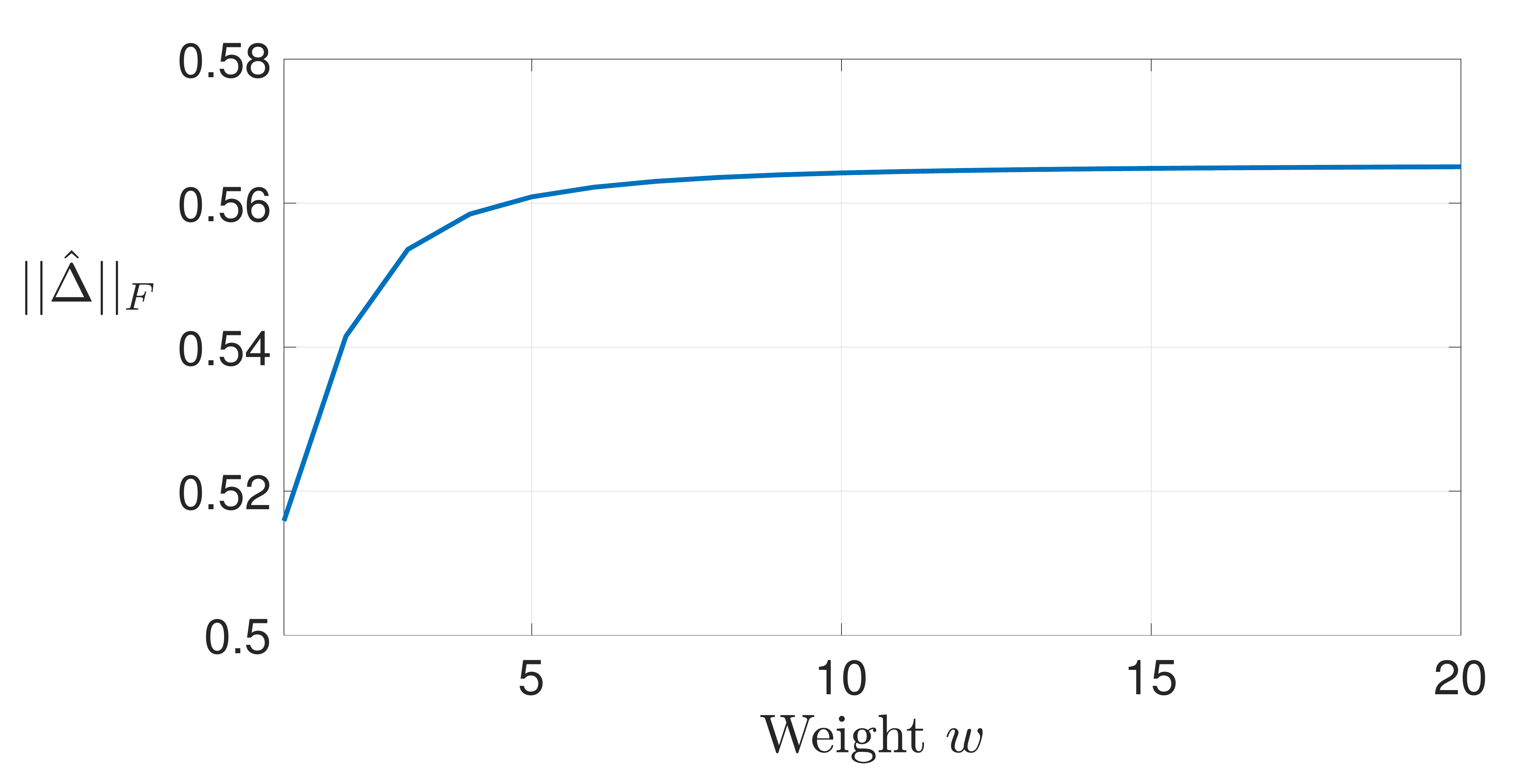} \label{fig:approx-spar_b}}
  \caption{Variation of (a) sparsity error $\log(E)$, and (b) norm of the optimal perturbation $\hat{\Delta}$, as a function of weight $w$.}
  \label{fig:approx-spar}
\end{figure}

\begin{figure}[h!]
  \centering
  \subfigure[Line network]{
 \hspace{1pt} \includegraphics[width=.75\columnwidth]{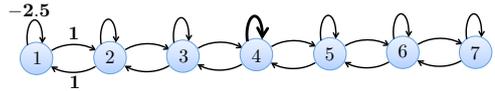} \label{fig:line_netw}}
  \subfigure[Circular network]{
 \includegraphics[width=.75\columnwidth]{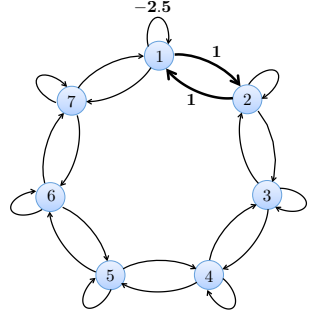} \label{fig:circ_netw}}
  \caption{Two symmetric networks. The bold edge(s) is/are most critical and result in the smallest sparse SR.}
  \label{fig:networks}
\end{figure}

Finally, we illustrate that our sparse SR framework provides structural insights into the stability of dynamical networks. We consider symmetric line and circular networks as shown in Figure \ref{fig:networks}, where the nodes represent the scalar states and the edges represent the non-zero couplings. All self loops have weight $-2.5$ and all inter-node edges have weight $1$. The state matrix $A$ can be easily constructed using these weights and it is stable.

We are interested in identifying the edge(s) that are most critical for the stability of the network. This can be characterized by assigning a sparsity pattern corresponding to a subset of edges that are perturbed, and computing the sparse SR using the developed framework. Then, the most critical edge set is the one which results in the least SR. For the line network, we allow only a singe edge to be perturbed. It implies that only one entry of $\Delta$ is allowed to be perturbed. For the circular network, we allow for two inter-node edges to be perturbed (self loop edges are fixed). This implies that only two non-diagonal entries of $\Delta$ are allowed to be perturbed. We set $B=C=I$ for both the networks.

For the line network, we observe that the most critical edge is the self loop of the node in the center of the line (node $4$).\footnote{If there is an even number of nodes, then there are two most critical edges corresponding to the self loops of the two center nodes.} The SR corresponding to this edge is $1.5118$. For the circular network, the two most critical inter-node edges are the edges between any two neighboring nodes. The optimal perturbations for the two edges are $0.9724$ and $0.9814$ (in any order) and the corresponding SR is $1.3816$. Due to the circular symmetry, there exist $7$ pairs of critical edges in the network. These examples highlight that our sparse SR framework is useful in studying the robust stability of sparse networks, which was not possible using the previous non-sparse SR theory. 


\section{Conclusion} \label{sec:conclusion} 

In this paper we study the real, sparse, $F$-norm stability radius
of a linear time-invariant system, which measures its ability to
maintain stability in the presence of structured additive perturbations. We
formulate the stability radius problem as an equality-constrained
minimization problem, and characterize its optimality
conditions. These conditions reveal important geometric properties of
the stability radius and the associated perturbation, and allow us to
design a penalty based Newton descent algorithm that provably converges to locally
optimal values of the stability radius and the associated
perturbation. Using the Frobenius norm to measure the size of
perturbations is not only convenient for the analysis, but it also
provides selective information regarding which system entries have a
greater effect on system stability. Further, imposing an arbitrary
sparsity pattern to the perturbation becomes crucial when
studying the stability radius of network systems and, more generally,
systems where only a subset of the entries can be perturbed. Numerical
examples are shown to highlight the utility of our framework for characterizing
structural fragility of networks. 


\renewcommand{\baselinestretch}{0.982}


\end{document}